\documentclass[11pt,reqno]{amsart}
\usepackage{amsmath,amssymb,a4wide}

\usepackage{amsthm}
\usepackage{xspace}
\usepackage{graphicx}
\usepackage{mathrsfs}
\usepackage{xcolor}



\numberwithin{equation}{section}  

\newtheorem{punkt}{}[section]

\theoremstyle{plain}
\newtheorem{corollary}[punkt]{Corollary}
\newtheorem{lemma}[punkt]{Lemma}
\newtheorem{proposition}[punkt]{Proposition}
\newtheorem{theorem}[punkt]{Theorem}

\theoremstyle{definition}
\newtheorem{remark}[punkt]{Remark}

\newtheorem{example}[punkt]{Example}

\theoremstyle{plain}
\newtheorem*{corollary*}{Corollary}
\newtheorem*{lemma*}{Lemma}
\newtheorem*{proposition*}{Proposition}
\newtheorem*{theorem*}{Theorem}

\theoremstyle{definition}
\newtheorem*{remark*}{Remark}
\newtheorem*{example*}{Example}
\newtheorem*{definition*}{Definition}


\newcommand{\CC}{\mathbb{C}}
\newcommand{\RR}{\mathbb{R}}
\newcommand{\NN}{\mathbb{N}}

\newcommand{\BB}{\mathbb{B}}
\newcommand{\leb}{\lambda}  

\newcommand{\myo}{\mathcal{O}}
\newcommand{\mys}{\mathcal{S}}

\newcommand{\ee}{\mathbb{E}}
\newcommand{\pp}{\mathbb{P}}

\def\per{\qopname\relax{no}{per}}

\def\im{\qopname\relax{no}{Im}\,}

\def\eg{e.g.\@\xspace}
\def\ie{i.e.\@\xspace}
\def\iid{i.i.d.\@\xspace}

\newcommand{\ts}{\hspace{1pt}}

\def\i{{\operatorname{i}}}
\def\d{{\,\operatorname{d}}}

\def\var{\mathbb{V}\!\operatorname{ar}}

\def\eox{\hfill$\lozenge$}

\def\MM#1{\mu^{(#1)}}

\def\RMM#1{\mu_{\text{red}}^{(#1)}}
\def\RCM#1{\kappa_{\text{red}}^{(#1)}}
\def\FMM#1{\mu^{\bullet(#1)}}

\def\RFMM#1{\mu_{\text{red}}^{\bullet(#1)}}
\def\RFCM#1{\kappa_{\text{red}}^{\bullet(#1)}}
\def\CF#1{\varrho^{}_{#1}}

\begin{document}

\title{Diffraction Theory of Point Processes: \\[2mm] 
   Systems With Clumping and Repulsion}

\author{Michael Baake, Holger K\"osters}
\address{Fakult\"at f\"ur Mathematik, 
Universit\"at Bielefeld,
Postfach 100131, 33501 Bielefeld, Germany}
\email{$\{$mbaake,hkoesters$\}$@math.uni-bielefeld.de}

\author{Robert V. Moody}
\address{Department of Mathematics and Statistics, 
University of Victoria, 
Victoria, B.C.,
Canada V8W 2Y2}
\email{rmoody@uvic.ca}

\date{November 24th, 2014}

\begin{abstract}
We discuss several examples of point processes
(all taken from~\cite{HKPV:2009}) 
for which the autocorrelation and diffraction measures
can be calculated explicitly.
These include certain classes
of determinantal and permanental point processes,
as well as an isometry-invariant point process that arises 
as the zero set of a Gaussian random analytic function.
\end{abstract}

\maketitle

\markboth{Diffraction Theory of Point Processes}{Diffraction Theory of
  Point Processes}

\medskip

\section{Introduction}
\label{sec:introduction}

Mathematical diffraction theory deals with the relationship
between the structure of point configurations in space 
and the associated autocorrelation and diffraction measures,
one of the main questions being how the order properties 
of the point configuration translate into properties
of the diffraction measure.
There exist many results about deterministic point configurations 
(periodic and aperiodic tilings, model sets, substitution systems,
compare \cite{baake-grimm:2013} and the references therein),
but in recent years random point configurations have also been considered.
In particular, reference \cite{baake-birkner-moody:2010} provides a
general framework for the investigation of point configurations within
the theory of point processes, along with a number of examples, most
of them closely connected to renewal and Poisson processes.  
However, beyond the i.i.d.\@ situation,
the number of explicit examples is still rather small, 
and it seems that more examples are needed for a better understanding 
of the problem and a further development of the theory. 
In particular, it seems desirable to have examples with some effective interaction
for which the autocorrelation and diffraction measures may be calculated explicitly.
Some of the few exceptional cases which have been considered here
are the Ising model on the square lattice 
and the dimer model on the triangular lattice; 
see \cite{baake-hoeffe:2000}.

The aim of this paper is to extend the list of these explicit examples
by discussing various point processes (all taken from~\cite{HKPV:2009}) 
from the viewpoint of mathematical diffraction theory.
All these examples are simple, stationary and ergodic (see
Section~2 for definitions).  Moreover, the numbers of points in
neighbouring subsets of Euclidean space may be either positively
correlated (``clumping'') or negatively correlated (``repulsion'').
We discuss determinantal and permanental point processes
as (classes of) examples for systems with repulsion and clumping, respectively.
As a further example for a system with repulsion, we consider the zero set 
of a certain Gaussian random analytic function in the complex plane. 
More precisely, we take the unique Gaussian random analytic function, up~to scaling, 
such that the zero set is translation-invariant (in~fact, even isometry-invariant)
in distribution; see Section~6 for details.
Furthermore, we also briefly look at Cox processes.

Our main results support the widespread expectations about
the diffraction measures of generic random point configurations 
(with good mixing properties, say).
For instance, in~most of our examples, the diffraction measure
is absolutely continuous apart from the (trivial) Bragg peak at the origin.
(Here, by a \emph{Bragg peak}, one understands a point mass 
contained in the diffraction measure.)
Moreover, for a certain class of determinantal and permanental point processes, it turns out 
that the ``diffraction spectrum'' is equivalent to the ``dynamical spectrum''; 
see Remark~\ref{rem:dynamical-spectrum} for details.
For general point processes, the question under what conditions 
this is true seems to be open,
but for the above examples, this is perhaps not too surprising
in view of their excellent mixing properties.

\section{Preliminaries}

This section contains some background information on point processes,
Fourier transforms, and mathematical diffraction theory.  As they are
sufficient for our~discussion here, we restrict ourselves to positive
measures, and refer to \cite{baake-birkner-moody:2010} and the
references therein for the general case.

\subsection{Point processes}

A measure on $\RR^d$ is a measure on the Borel $\sigma$-field
(or $\sigma$-algebra) $\BB^d$ of~$\RR^d$.  A measure $\omega$ on $\RR^d$ is
\emph{locally finite} if $\omega(A) < \infty$ for any bounded Borel
set $A$.  A \emph{point measure} on $\RR^d$ is a measure on
$\RR^d$ taking values in $\NN_0 \cup \{ \infty \}$.  A point
measure $\omega$ on $\RR^d$ is \emph{simple} if $\omega(\{x\})
\leq 1$ for any $x \in \RR^d$.  We write $\mathcal{M}(\RR^d)$
for the space of locally finite measures on $\RR^d$ and
$\mathcal{N}(\RR^d)$ for the subspace of locally finite point
measures on $\RR^d$.  It is well known (compare \cite[Appendix
A2]{DV1} and \cite[Section 9]{DV2}) that there exists a metric such
that $\mathcal{M}(\RR^d)$ and $\mathcal{N}(\RR^d)$ are
complete separable metric spaces, the induced topology is that of
vague convergence, and the induced Borel $\sigma$-fields
$\mathscr{M}(\RR^d)$ and $\mathscr{N}(\RR^d)$ are the smallest
$\sigma$-fields such that the mappings $\omega \mapsto \omega(A)$, 
with $A \in \BB^d$, are measurable.  A~\emph{random measure} 
is a random variable taking values in $\mathcal{M}(\RR^d)$, 
and a~\emph{point process} is a random variable taking values 
in $\mathcal{N}(\RR^d)$.  A point process is called \emph{simple} 
if its realisation is simple with probability $1$.  
If $\omega$ is a random measure or a point process such that 
$\ee(\omega(A)) < \infty$ for any bounded Borel set $A$, 
we say that the expectation measure of $\omega$ exists, 
and call the~measure $A \mapsto \ee(\omega(A))$ 
the \emph{expectation measure} of $\omega$.

A locally finite point measure $\omega$ on $\RR^d$ may be written
in the form $\omega = \sum_{i \in I} \delta_{x_i}$, where $I$ is~a
countable index set and $(x_i)_{i \in I}$ is a~family of points in
$\RR^d$ with at most finitely points in any bounded Borel set.
Then, for any $k \geq 1$, the locally finite point measures $\omega^k$
and $\omega^{\bullet k}$ on $(\RR^d)^k$ are defined by
\[
  \omega^k \, := 
  \sum_{i_1,\hdots,i_k \in I} \delta_{(x_{i_1},\hdots,x_{i_k})}
  \qquad\text{and}\qquad
  \omega^{\bullet k} \, := 
  \sum_{\substack{i_1,\hdots,i_k \in I \\ 
  \text{distinct}}} \delta_{(x_{i_1},\hdots,x_{i_k})} \,.
\]
Note that $\omega^k$ is simply the $k$-fold product measure of
$\omega$.  If $\omega$ is a point process such that $\ee(\omega(A)^k)
< \infty$ for any bounded Borel set $A$, the expectation measures
$\MM{k}$ and $\FMM{k}$ of $\omega^k$ and $\omega^{\bullet k}$ are
called the \emph{$k$th moment measure} of $\omega$ and the \emph{$k$th
  factorial moment measure} of $\omega$, respectively.  If $\FMM{k}$
is absolutely continuous with~respect~to Lebesgue measure
on~$(\RR^d)^k$, its density is denoted by $\CF{k}$ and called
the \emph{$k$-point correlation function} of the point process
$\omega$.

\begin{remark}
  It is easy to see that the absolute continuity of the second
  factorial moment measure implies that the underlying point process
  is simple; compare \cite[Proposition~5.4.6]{DV1} for details.  \eox
\end{remark}

For any $x \in \RR^d$, we write $T_x$ for the translation by $x$ 
on~$\RR^d$ (with $T_x(u) := u + x$).
The notation $T_x$ is extended to the induced translation actions
on sets, functions, and measures in the standard way:
$T_x(A) := x+A := \{ T_x(y) : y \in A \}$, 
$(T_x f)(y) := f(T_{-x}(y)) = f(-x+y)$,
and $(T_x \omega)(A) := \omega (T_{-x} (A)) = \omega(-x+A)$.
Note that the latter is consistent with translation of an indicator function $\pmb{1}_A$
(also interpretable as a measure) so that $T_x(\pmb{1}_A) = \pmb{1}_{T_x(A)}$.
Finally, we also write $T_x$ for the induced translation on sets of measures: 
$T_x(B) := \{ T_x(\omega) : \omega \in B \}$.

A set $B \in \mathscr{M}(\RR^d)$ is called \emph{invariant} if
$T_x^{-1}(B) = B$ for any $x \in \RR^d$.  A~random measure
$\omega$ on $\RR^d$ is called \emph{stationary} (or
\emph{translation-invariant}) if for any $x \in \RR^d$,
$\omega$~and~$T_x(\omega)$ have the same distribution.  A~random
measure $\omega$ on $\RR^d$ is called \emph{ergodic} if it is
stationary and if, for any invariant set $B \in
\mathscr{M}(\RR^d)$, one has $\pp(\omega \in B) \linebreak[2] \in
\{ 0,1 \}$.  A~random measure $\omega$ on $\RR^d$ is called
\emph{mixing} if it is stationary and if, for any sets $B_1,B_2 \in
\mathscr{M}(\RR^d)$, one has $\pp(\omega \in T_x^{-1}(B_1) \cap
B_2) \longrightarrow \pp(\omega \in B_1) \, \pp(\omega \in B_2)$ as
$|x| \to \infty$.  Here, $|x|$ denotes the Euclidean norm of $x$.  
It is well known that mixing implies ergodicity; see \eg \cite[Section 12.3]{DV2}.

If $\omega$ is a stationary point process such that $\ee(\omega(A)^k)
< \infty$ for any bounded Borel set $A$, the \emph{reduced $k$th
  moment measure} $\RMM{k}$ of $\omega$ and the \emph{reduced $k$th
  factorial moment measure} $\RFMM{k}$ of $\omega$ are the (unique)
locally finite measures on $(\RR^{d})^{k-1}$ such that
\[
  \int_{(\RR^{d})^{k}} f(x_1,x_2,\hdots,x_k) \d{\MM{k}}(x) \, = 
  \int_{\RR^{d}} \int_{(\RR^{d})^{k-1}} f(x,x+y_1,\hdots,x+y_{k-1}) \d{\RMM{k}}(y) \d{x}
\]
and
\[
  \int_{(\RR^{d})^{k}} f(x_1,x_2,\hdots,x_k) \d{\FMM{k}}(x) \, = 
  \int_{\RR^{d}} \int_{(\RR^{d})^{k-1}} f(x,x+y_1,\hdots,x+y_{k-1}) \d{\RFMM{k}}(y) \d{x}
\]
for any bounded measurable function $f$ on $(\RR^d)^k$ with
bounded support, compare  \cite[Proposition~12.6.3]{DV2}.
Here, the first reduced (factorial) moment measure is regarded as a
constant $\varrho$, which is also called the \emph{mean density} of
the point process.  Furthermore, if $\FMM{k}$ is absolutely
continuous with respect to Lebesgue measure on~$(\RR^d)^k$, we may
assume its density $\CF{k}(x_1,\hdots,x_k)$ to be
\emph{translation-invariant}, i.e.
\begin{equation}\label{eq:translation-invariance}
  \CF{k}(x_1+t,\hdots,x_k+t) \, = \, \CF{k}(x_1,\hdots,x_k)
\end{equation}
for any $t \in \RR^d$.  Then, $\RFMM{k}$ is absolutely continuous
with respect to Lebesgue measure on~$(\RR^d)^{k-1}$, with density
$ \CF{k}(0,y_1,\hdots,y_{k-1}) \,.  $ In particular, $\CF{1}(0) =
\varrho$.

\subsection{Fourier transforms}
For the Fourier transform of a function $f \in L^1(\RR^d)$, we use
the convention
\begin{equation}\label{eq:fouriertransform}
  \widehat{f}(y) \, = \int_{\RR^d} f(x) e^{-2\pi\i xy} \d{x} \ts ,
\end{equation}
where $xy$ is the standard inner product on $\RR^d$.  Let us note
that, with this con\-vention, Fourier inversion takes the form
\begin{equation}\label{eq:fourierinversion}
  f(x) \, = \int_{\RR^d} \widehat{f}(y) e^{2\pi\i xy} \d{y}
\end{equation}
for any continuous function $f \in L^1(\RR^d)$ such that
$\widehat{f} \in L^1(\RR^d)$.
The special case of radially symmetric functions will be discussed later
when we need it.

Besides Fourier transforms of $L^1$-functions, we will also use
Fourier transforms of $L^2$-functions.  The Fourier transform on $L^2$
is defined as the (unique) continuous extension of the Fourier
transform on $L^1$ restricted to $L^1 \cap L^2$, viewed as a mapping
from $L^1 \cap L^2 \subset L^2$ to $L^2$.  It is well known that the
Fourier transform on $L^2$ is an~isometry.

Moreover, we will also use Fourier transforms of translation-bounded 
measures.  A measure $\mu$ on $\RR^d$ is
\emph{translation-bounded} if, for any bounded Borel set $B \in \BB^d$, 
$\sup_{x \in \RR^d} \mu(x + B) < \infty$.  
A translation-bounded measure $\mu$ on $\RR^d$ is
\emph{transformable} if there exists a translation-bounded measure
$\widehat\mu$ on $\RR^d$ such that
\[
  \int_{\RR^{d}} f(x) \d\widehat\mu(x) \, = 
  \int_{\RR^{d}} \widehat{f}(x) \d\mu(x)  
\]
holds for any Schwartz function $f \in \mys(\RR^d)$.  
In this case, the measure $\widehat\mu$ is unique, and it~is~called 
the \emph{Fourier transform} of the measure $\mu$.  Indeed, 
these definitions may even be extended to \emph{signed} measures; 
see \cite{baake-birkner-moody:2010} for details.

Given a locally integrable function $f$ on $\RR^d$, we write $f \,
\leb^d$ or $f(x) \, \leb^d$ for the signed measure on
$\RR^d$ given by $(f \, \leb^d)(B) := \int_B f(x) \d{x}$,
with $B \in \BB^d$ bounded.  Note that if $f$ is integrable, 
$(f \, \leb^d)\widehat\enspace = \widehat{f} \, \leb^d$.
Note further that, if $f$ is the Fourier transform of some integrable
function $\varphi$, it follows by standard Fourier inversion that 
$(f \, \leb^d)\widehat\enspace = \varphi^{}_{-} \, \leb^d$ where
$\varphi^{}_{-}(x) := \varphi(-x)$ denotes the reflection of $\varphi$ 
at the origin.

\subsection{Mathematical diffraction theory}

Let $\omega$ be a locally finite measure on~$\RR^d$, and let
$\widetilde\omega(A) := \omega(-A)$ be its reflection at the origin.
Write $B_n$ for the open ball of radius $n$ around the origin and
$\leb^d(B_n)$ for its $d$-dimensional volume.  The
\emph{auto\-correlation measure} of $\omega$ is defined by
\[
  \gamma \, := \lim_{n \to \infty} 
  \frac{\omega|_{B_n} \ast \widetilde{\omega|_{B_n}}}{\leb^d(B_n)} \ts ,
\]
provided that the limit exists in $\mathcal{M}(\RR^d)$
with respect to the vague topology.
In this case, the \emph{diffraction measure} of $\omega$ is the Fourier
transform of $\gamma$.  Let us note that $\widehat\gamma$ exists due
to the fact that $\gamma$ is a positive and positive-definite measure,
and that $\widehat\gamma$ is also a positive and positive-definite
measure; see \cite{baake-birkner-moody:2010}, \cite{baake-grimm:2013},
\cite{berg-forst:1975} or \cite[Section 8.6]{DV1} for details.

In each of the following examples, $\omega$ will be given by the realisation
of a stationary and ergodic simple point process on $\RR^d$.  The
following result from \cite{baake-birkner-moody:2010}, see also \cite{lenz-strungaru:2009},
shows that the auto\-correlation and diffraction measures exist and, moreover,
almost surely do not depend on the~realisation.

\begin{theorem}[{\cite[Theorem 1.1]{gouere:2003}, 
\cite[Theorem 3]{baake-birkner-moody:2010}}]\label{thm:ET}
Let $\omega$ be a stationary and ergodic point process such that the
reduced first~moment measure $\RMM{1} = \varrho$ and the reduced
second~moment measure $\RMM{2}$ exist.  Then, almost surely, the
auto\-correlation measure $\gamma$ of\/ $\omega$ exists and satisfies
\[
  \gamma \, = \, \RMM{2} \, = \, \varrho \delta_0 + \RFMM{2} \ts .
\]
\qed
\end{theorem}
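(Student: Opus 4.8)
The plan is to reduce the statement to a multidimensional pointwise ergodic theorem for the translation action $\{T_x\}_{x\in\RR^d}$ on $\mathcal N(\RR^d)$, which is measure-preserving by stationarity and ergodic by hypothesis. Since both claimed identities are between positive measures, it suffices to fix an arbitrary test function $g\in C_c(\RR^d)$ and to show that $\gamma_n(g):=\leb^d(B_n)^{-1}\bigl(\omega|_{B_n}\ast\widetilde{\omega|_{B_n}}\bigr)(g)$ converges almost surely to $\RMM{2}(g)$; unpacking the convolution of point measures gives $\bigl(\omega|_{B_n}\ast\widetilde{\omega|_{B_n}}\bigr)(g)=\sum_{x_i,x_j\in B_n}g(x_i-x_j)=:S_n$.

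I would first dispose of the purely algebraic decomposition. Writing $\omega=\sum_{i}\delta_{x_i}$ and separating the diagonal, $\omega^2=\sum_i\delta_{(x_i,x_i)}+\omega^{\bullet 2}$; taking expectations gives $\MM{2}=D+\FMM{2}$, where $\int f\d D=\ee\sum_i f(x_i,x_i)=\varrho\int_{\RR^d}f(x,x)\d x$ because $\MM{1}=\varrho\,\leb^d$ by stationarity. Passing to reduced measures, the diagonal contributes exactly $\varrho\delta_0$, which yields $\RMM{2}=\varrho\delta_0+\RFMM{2}$. Along the way I would record the reflection symmetry $\widetilde{\RMM{2}}=\RMM{2}$: it follows from the invariance of $\MM{2}$ under $(x_1,x_2)\mapsto(x_2,x_1)$ after the volume-preserving change of variables $(x,y)\mapsto(x+y,-y)$ in the reduction formula, and it is what will match the reflected convolution with $\RMM{2}$ itself.

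For the main identity, the key device is to localise. Fixing the unit cube $C:=[0,1)^d$, set
\[
  F(\eta)\,:=\,\int_{C}\biggl(\int_{\RR^d}g(u-v)\d\eta(v)\biggr)\d\eta(u)
\]
for $\eta\in\mathcal N(\RR^d)$. Campbell's formula together with the reduction identity gives $\ee|F|\le\int_{\RR^d}|g(-y)|\d\RMM{2}(y)<\infty$, since $\RMM{2}$ is locally finite and $g$ has compact support, so $F\in L^1(\pp)$ with $\ee F=\int_{\RR^d}g(-y)\d\RMM{2}(y)=\RMM{2}(g)$ by the reflection symmetry. A direct computation shows $\int_{B_n}F(T_{-x}\omega)\d x=\sum_i\leb^d\bigl(B_n\cap(x_i-C)\bigr)\sum_j g(x_i-x_j)$, which differs from $S_n$ only through boundary effects, namely replacing the smooth weight by the sharp indicator $\pmb 1_{B_n}(x_i)$ and restricting the inner sum to $x_j\in B_n$. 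Because $g$ is supported in some $B_R$, both discrepancies are confined to points within distance $R+\sqrt d$ of $\partial B_n$, a shell of relative volume $O(n^{-1})$, so the van Hove (Følner) property of the ball sequence forces $\leb^d(B_n)^{-1}\bigl(S_n-\int_{B_n}F(T_{-x}\omega)\d x\bigr)\to 0$.

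With this bridge in place, the multidimensional (Wiener) ergodic theorem applied to $F\in L^1(\pp)$ along the van Hove sequence $(B_n)$ gives $\leb^d(B_n)^{-1}\int_{B_n}F(T_{-x}\omega)\d x\to\ee F=\RMM{2}(g)$ almost surely, whence $\gamma_n(g)\to\RMM{2}(g)$. To obtain a single null set valid for all test functions at once, and hence genuine vague convergence, I would run this for a countable family $\{g_m\}$ dense in $C_c(\RR^d)$ (uniformly on compacta, with controlled supports) and extend by approximation, using the convergence of $\gamma_n(h)$ for a fixed nonnegative $h$ dominating the relevant supports to control the error uniformly in $n$. The step I expect to be the main obstacle is precisely this bridge: passing from the discrete double point-sum $S_n$ to a continuous-parameter spatial average to which the ergodic theorem applies, and showing that the boundary discrepancy is negligible \emph{almost surely} rather than merely in expectation; here the compact support of $g$ and the van Hove property do the real work, while the moment-measure bookkeeping and the ergodic theorem itself are comparatively routine.
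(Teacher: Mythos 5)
Your proposal is sound in outline, but note first that the paper itself does not prove Theorem~\ref{thm:ET} at all: the statement is quoted, with attribution, from \cite[Theorem~1.1]{gouere:2003} and \cite[Theorem~3]{baake-birkner-moody:2010}, and the remark following it explains that those sources phrase the result via the \emph{Palm measure} (the autocorrelation equals the first moment measure of the Palm measure, which under the stated hypotheses coincides with $\RMM{2}$). The proofs in those references therefore run the pointwise ergodic theorem through Palm--Campbell calculus, whereas you bypass the Palm measure entirely: the algebraic split $\omega^2 = \sum_i \delta_{(x_i,x_i)} + \omega^{\bullet 2}$ giving $\RMM{2} = \varrho\ts\delta_0 + \RFMM{2}$, the reflection symmetry $\widetilde{\RMM{2}} = \RMM{2}$ via the change of variables $(x,y)\mapsto(x+y,-y)$, and Wiener's pointwise ergodic theorem for the $\RR^d$-action applied to the localisation $F(\eta) = \int_C \int g(u-v) \d\eta(v) \d\eta(u)$, with $\ee F = \RMM{2}(g)$, are all correct. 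This is a legitimate, self-contained alternative; what the Palm formulation buys is that it is the standard bookkeeping device under which the result is recorded (and extended, e.g.\ to random measures) in the literature, while your route has the advantage of requiring nothing beyond the ergodic theorem and elementary moment-measure identities.

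There is one step you must tighten, and it is exactly the one you flag as the main obstacle: the van Hove property alone does \emph{not} force the almost-sure negligibility of the boundary discrepancy. That discrepancy is bounded by $\sum_{x_i \in A_n} \sum_j |g(x_i - x_j)|$, where $A_n$ is a shell of width $O(1)$ around $\partial B_n$; the relative volume $O(n^{-1})$ of this shell controls the \emph{expectation} of the sum, not its a.s.\ behaviour, since the number of points in the shell and their local pair counts fluctuate. The fix is the dominating-function device you invoke only later for the density argument: set $G(\eta) := \int_C \int |g(u-v)| \d\eta(v) \d\eta(u) \geq 0$, note that the discrepancy is at most $\int_{B_{n+c} \setminus B_{n-c}} G(T_{-x}\omega) \d{x}$ for a constant $c$ depending only on the support radius of $g$ and the diameter of $C$ (because the weight $\leb^d\bigl((B_{n+c}\setminus B_{n-c}) \cap (x_i - C)\bigr)$ equals $1$ for every $x_i$ in the shell), and apply the ergodic theorem twice, to $G$ along $B_{n+c}$ and along $B_{n-c}$: both normalised integrals converge a.s.\ to $\ee G$, and since $\leb^d(B_{n \pm c})/\leb^d(B_n) \to 1$, their difference divided by $\leb^d(B_n)$ tends to $0$ almost surely. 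With this patch, together with the routine countable-dense-family argument you sketch for upgrading to vague convergence on a single null set, your proof is complete.
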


\begin{remark}
  In \cite[Theorem 1.1]{gouere:2003} and \cite[Theorem
  3]{baake-birkner-moody:2010}, the preceding result is stated in a
  slightly different form, namely that the autocorrelation measure of
  $\omega$ is equal to the first moment measure of the so-called Palm
  measure of $\omega$.  However, 
  the latter coincides with the reduced second moment measure of~$\omega$ 
  under the assumptions of the theorem; 
  see \eg \linebreak[1] \cite[Equation (47)]{baake-birkner-moody:2010}. 
  The second equality in Theorem \ref{thm:ET} is a well-known relation
  between $\RMM{2}$ and $\RFMM{2}$; see e.g.\@ \cite[Section 8.1]{DV1}.
  \eox
\end{remark}

We will often use Theorem~\ref{thm:ET} in the following form.

\begin{corollary}\label{cor:ET}
  Suppose that, in addition to the assumptions of Theorem
  $\ref{thm:ET}$, the factorial moment measure $\FMM{2}$ is absolutely
  continuous with a translation-invariant density of the form
\begin{equation}\label{eq:extra-2}
  \CF{2}(x_1,x_2)\, = \, \varrho^2 + g(x_2-x_1) 
  \qquad \text{with $g \in L^1(\RR^d)$.}
\end{equation}
Then, almost surely, the autocorrelation and diffraction measures of\/
$\omega$ exist and are given by
\begin{equation}\label{eq:auto-3}
  \gamma \, = \, \varrho\ts \delta_0 + (\varrho^2 + g) \ts \leb^d
\end{equation}
and
\begin{equation}\label{eq:diff-3}
  \widehat\gamma \, = \, \varrho^2 \delta_0 + 
    \bigl(\varrho + \widehat{g}\,\bigr) \ts \leb^d \ts ,
\end{equation}
respectively.
\end{corollary}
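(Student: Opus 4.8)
The plan is to read off the autocorrelation directly from Theorem~\ref{thm:ET}, and then to obtain the diffraction measure by Fourier-transforming it term by term. For the autocorrelation, I would first note that Theorem~\ref{thm:ET} applies under the stated hypotheses and gives, almost surely,
\[
  \gamma \, = \, \varrho\ts\delta_0 + \RFMM{2} \ts .
\]
It then remains to identify $\RFMM{2}$. Since $\FMM{2}$ is absolutely continuous with the translation-invariant density $\CF{2}$ of the form~\eqref{eq:extra-2}, the discussion following~\eqref{eq:translation-invariance} shows that $\RFMM{2}$ is absolutely continuous with density $\CF{2}(0,y) = \varrho^2 + g(y)$, so that $\RFMM{2} = (\varrho^2+g)\ts\leb^d$. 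Substituting this yields~\eqref{eq:auto-3}.

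For the diffraction measure I would observe that $\gamma$ from~\eqref{eq:auto-3} is translation-bounded — the point mass at the origin is harmless, and the background term is controlled because $g \in L^1(\RR^d)$ — and positive-definite, so that $\widehat\gamma$ exists by the general theory recalled above. To identify it, I would verify the defining relation $\int f \d\widehat\gamma = \int \widehat{f}\d\gamma$ for every Schwartz function $f$, inserting the candidate $\varrho^2\delta_0 + (\varrho+\widehat{g})\ts\leb^d$ on the left. Expanding both sides and matching terms, three standard identities do the work: $\widehat{f}(0) = \int_{\RR^d} f(x)\d{x}$ pairs the Bragg peak $\varrho\ts\delta_0$ of $\gamma$ with the constant background $\varrho\ts\leb^d$ of $\widehat\gamma$; Fourier inversion $f(0) = \int_{\RR^d} \widehat{f}(x)\d{x}$ pairs the constant background $\varrho^2\leb^d$ of $\gamma$ with the Bragg peak $\varrho^2\ts\delta_0$ of $\widehat\gamma$; and the multiplication formula $\int_{\RR^d} \widehat{f}\ts g = \int_{\RR^d} f\ts\widehat{g}$ — equivalently $\widehat{g\ts\leb^d} = \widehat{g}\ts\leb^d$ for $g \in L^1(\RR^d)$, as recorded in the preliminaries — handles the remaining $g$-term. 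This gives~\eqref{eq:diff-3}.

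Since Theorem~\ref{thm:ET} supplies the only substantive input, I do not expect a genuine obstacle here: the statement is essentially a bookkeeping exercise in the Fourier transform of transformable measures. The one point meriting care is the treatment of the singular pieces $\delta_0$ and $\leb^d$, whose Fourier transforms cross over ($\delta_0 \mapsto \leb^d$ and $\leb^d \mapsto \delta_0$); it is precisely this crossing that produces the Bragg peak $\varrho^2\ts\delta_0$ of $\widehat\gamma$ out of the constant background of $\gamma$, and conversely the constant $\varrho\ts\leb^d$ out of the Bragg peak $\varrho\ts\delta_0$. Keeping these two pairings straight — and thereby accounting for the two distinct ways the mean density $\varrho$ enters~\eqref{eq:diff-3} — is the only place where an error could plausibly arise.
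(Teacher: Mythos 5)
Your proposal is correct and takes essentially the same approach as the paper: the autocorrelation is read off from Theorem~\ref{thm:ET} together with the identification of $\RFMM{2}$ as $(\varrho^2+g)\,\leb^d$ via the comments around Eq.~\eqref{eq:translation-invariance}, and the diffraction follows by termwise Fourier transform. Your verification against Schwartz functions merely unfolds the three identities the paper invokes directly, namely $\widehat{\delta_0}=\leb^d$, $\widehat{\leb^d}=\delta_0$ and $\widehat{g\,\leb^d}=\widehat{g}\,\leb^d$.
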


If the argument of the function needs to be specified, 
we usually write $g(x) \ts \leb^d$ in~\eqref{eq:auto-3}
and $\widehat{g}(t) \ts \leb^d$ in \eqref{eq:diff-3}.
Note that, under the assumptions of Corollary~\ref{cor:ET}, 
the diffraction measure is absolutely continuous apart from 
the Bragg peak at the origin.

\begin{proof}[Proof of Corollary $\ref{cor:ET}$]
  Eq.~\eqref{eq:auto-3} is immediate from Theorem~\ref{thm:ET} and our
  comments around Eq.~\eqref{eq:translation-invariance}.
  Eq.~\eqref{eq:diff-3} then follows by taking the Fourier transform
  and using the relations $\widehat{\delta_0} = \leb^d$,
  $\widehat{\leb^d} = \delta_0$, 
  and $\widehat{g \, \leb^d} = \widehat{g} \, \leb^d$.
\end{proof}

\section{Determinantal point processes}
\label{sec:DPP}

Determinantal point processes are used to model particle
configurations with repulsion; see \cite{soshnikov:2000},
\cite[Section 4.2]{HKPV:2009} or \cite[Section 4.2]{AGZ:2010} for
background information.  See also \cite{macchi:1975} and \cite[Example
5.4\,(c)]{DV1}, where these point processes are called \emph{fermion
  processes}.

In the sequel, we shall always assume the following:
\begin{align}
\label{eq:main}
\begin{minipage}{0.8\textwidth}
  The kernel $K : \RR^d \times \RR^d \longrightarrow \CC$ 
  is continuous, Hermitian and positive-definite.
\end{minipage}
\end{align}
Here, $K$ is \emph{Hermitian} if, for all $x,y \in \RR^d$, $K(x,y) =
\overline{K(y,x)}$, and \emph{positive-definite} if, for all $k \in
\NN$ and all $x_1,\hdots,x_k \in \RR^d$, $\det \bigl(K(x_i,x_j)^{}_{1 \leq
  i,j \leq k} \bigr) \geq 0$.

A point process $\omega$ on $\RR^d$ is called \emph{determinantal}
with kernel $K$ if, for any $k \in \NN$, the $k$-point correlation
function exists and is given by
\begin{equation}\label{eq:DPP-def}
  \CF{k}(x_1,\hdots,x_k) \, = \,
   \det \bigl( K(x_i,x_j)^{}_{1 \leq i,j \leq k} \bigr) .
\end{equation}

It is well known that, if there exists a determinantal point process
with a given kernel $K$, its distribution is uniquely determined;
compare \cite[Lemma 4.2.6]{HKPV:2009}.  As~regards existence, note
first that, if $K$ is a kernel as in Eq.~\eqref{eq:main}, then, for any
compact subset $B \subset \RR^d$, we have an~integral operator
$\mathcal{K}_{B} : L^2(B) \to L^2(B)$ defined by
\begin{equation}\label{eq:operator}
  \bigl(\mathcal{K}_{B} f\bigr)(x) \, := 
  \int_B K(x,y) f(y) \d{y} \qquad (x \in B) \ts .
\end{equation}
It is well known that this operator is bounded, self-adjoint,
positive-definite and of trace class; see \cite[Lemma
4.2.13]{AGZ:2010} and references given there.  Furthermore, there is
the following criterion for the existence of an associated
determinantal point process.

\begin{theorem}[{\cite[Theorem 3]{soshnikov:2000}, 
\cite[Theorem~4.5.5]{HKPV:2009}, \cite[Corollary~4.2.22]{AGZ:2010}}]
\label{thm:DPP-existence}
Let $K$ be a kernel as in Eq.~\eqref{eq:main}.  Then $K$ defines a
determinantal point process on $\RR^d$ if and only if, for any compact
subset $B \subset \RR^d$, the spectrum of the operator
$\mathcal{K}_{B}$ is~contained in the interval\/ $[0,1]$.  \qed
\end{theorem}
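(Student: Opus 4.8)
The plan is to prove the two implications separately, with sufficiency as the substantial direction. Fix a compact set $B \subset \RR^d$ throughout. By the remarks preceding the statement, $\mathcal{K}_B$ is self-adjoint, trace class and positive (the last because $K$ is positive-definite), so its spectrum already lies in $[0,\infty)$; the real content of the theorem is thus that existence is equivalent to all eigenvalues being bounded above by $1$. Since $K$ is continuous and $\mathcal{K}_B$ is trace class, Mercer's theorem applies and gives a uniformly convergent expansion $K(x,y) = \sum_n \lambda_n \phi_n(x)\overline{\phi_n(y)}$ on $B \times B$, where $(\phi_n)$ is an orthonormal system of eigenfunctions in $L^2(B)$ and the $\lambda_n \geq 0$ are the eigenvalues of $\mathcal{K}_B$.

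For necessity, suppose a determinantal point process $\omega$ with kernel $K$ exists. Using \eqref{eq:DPP-def}, the $k$th factorial moment of the count $\omega(B)$ equals $\int_{B^k}\det(K(x_i,x_j))\d x$, which is $k!$ times the $k$th elementary symmetric function of the eigenvalues $\lambda_1,\lambda_2,\hdots$. These coincide with the factorial moments of a sum $\sum_n \xi_n$ of independent Bernoulli variables with parameters $\lambda_n$, and since the moments determine the law, $\omega(B)$ has exactly this distribution. As a Bernoulli parameter must lie in $[0,1]$, we conclude $\lambda_n \leq 1$; alternatively, the void probability $\pp(\omega(B)=0) = \prod_n(1-\lambda_n)$ together with its analogues on subsets of $B$ must lie in $[0,1]$, which already rules out eigenvalues exceeding $1$.

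For sufficiency, assume $\lambda_n \in [0,1]$ for all $n$ and construct the process on $B$ explicitly, following Macchi and Soshnikov. First I would treat a projection kernel, i.e.\ a finite orthonormal family $\psi_1,\hdots,\psi_m$ with $K_{\mathrm{proj}}(x,y) = \sum_{j=1}^m \psi_j(x)\overline{\psi_j(y)}$: the measure on $B^m$ with density $\tfrac{1}{m!}\det(K_{\mathrm{proj}}(x_i,x_j))_{1\le i,j\le m}$ integrates to $1$ by orthonormality (a Gram-determinant computation), hence defines a point process with exactly $m$ points, and a direct calculation identifies its correlation functions with the corresponding determinantal minors. For the general kernel I would then introduce independent Bernoulli variables $\xi_n \sim \mathrm{Bernoulli}(\lambda_n)$, sample the projection process associated with the random kernel $\sum_n \xi_n \phi_n(x)\overline{\phi_n(y)}$, and integrate over the $\xi_n$; the resulting mixture has $k$-point correlation function equal to $\det(K(x_i,x_j))$, which is the heart of the construction.

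Finally, I would pass from compact sets to all of $\RR^d$. Choosing an exhausting sequence $B_1 \subset B_2 \subset \cdots$ with $\bigcup_j B_j = \RR^d$, the processes built on the $B_j$ share the same correlation functions, namely the determinants of the single global kernel $K$; since a finite point process is determined by its correlation functions, these restrictions are consistent. Kolmogorov's extension theorem then yields a unique point process on $\RR^d$ realising the prescribed correlation functions. I expect the main obstacle to be the sufficiency direction: verifying that the Bernoulli mixture of projection processes has precisely the determinantal correlation functions is exactly where the hypothesis $\lambda_n \le 1$ is indispensable, since it is what makes each $\xi_n$ a legitimate Bernoulli variable, and the positivity and consistency underpinning the extension step rest on the same bound.
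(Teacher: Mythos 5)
First, a point of orientation: the paper itself gives no proof of Theorem~\ref{thm:DPP-existence} at all --- it is quoted from the literature (Soshnikov, HKPV, AGZ) with the proof omitted --- so your attempt can only be measured against the standard proofs in those references. Your sufficiency direction follows exactly the route of HKPV (their Theorem~4.5.3 and the proof of Theorem~4.5.5): Mercer expansion of $K$ on $B$, the rank-$m$ projection process with density $\frac{1}{m!}\det\bigl(K_{\mathrm{proj}}(x_i,x_j)\bigr)$, the Bernoulli mixture with random kernel $\sum_n \xi_n\phi_n(x)\overline{\phi_n(y)}$ (almost surely of finite rank, since $\sum_n\lambda_n=\trace\mathcal{K}_B<\infty$), the Cauchy--Binet computation showing $\ee_\xi\det\bigl(K_\xi(x_i,x_j)\bigr)=\det\bigl(K(x_i,x_j)\bigr)$, and then consistency plus extension. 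One caveat in the last step: ``a finite point process is determined by its correlation functions'' is false in general; what saves you here is Hadamard's inequality $\det\bigl(K(x_i,x_j)\bigr)\le\prod_iK(x_i,x_i)$, which makes the moment problem determinate --- this is exactly \cite[Lemma~4.2.6]{HKPV:2009}, which the paper invokes just before the theorem. So the sufficiency outline, while leaving its two key computations unexecuted, is the right skeleton and can be completed.

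The genuine gap is in your necessity argument, which is circular as written. You assert that the factorial moments $k!\ts e_k(\lambda_1,\lambda_2,\hdots)$ of $\omega(B)$ ``coincide with the factorial moments of a sum of independent Bernoulli variables with parameters $\lambda_n$'' and conclude $\lambda_n\le1$ because ``a Bernoulli parameter must lie in $[0,1]$''. But Bernoulli variables with parameters $\lambda_n$ exist only when $\lambda_n\le1$, which is precisely what is to be proved; the moment identity alone carries no such information. Your fallback via void probabilities is also insufficient: $\pp(\omega(B)=0)=\prod_n(1-\lambda_n)\ge0$ is perfectly compatible with an \emph{even} number of eigenvalues exceeding $1$, and making the ``analogues on subsets of $B$'' work would require a continuity-and-monotonicity argument for the eigenvalues under shrinking of the domain, including the degenerate case where several eigenvalues cross $1$ simultaneously --- none of which you supply. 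The standard repair, and the way the cited references argue, is via the generating function: the Hadamard bound makes the factorial-moment series entire, whence
\begin{equation*}
  \ee\bigl[z^{\omega(B)}\bigr] \, = \, \prod_n\bigl(1-(1-z)\lambda_n\bigr)
  \qquad\text{for all } z\in[0,1] \ts ;
\end{equation*}
since $\omega$ is locally finite, $\omega(B)<\infty$ almost surely, so the left-hand side is strictly positive for every $z\in(0,1)$; hence no factor on the right may vanish, \ie $\lambda_n\ne 1/(1-z)$ for every $z\in(0,1)$, which forces $\lambda_n\le1$. Replacing your Bernoulli comparison by this argument closes the gap.
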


\begin{remark}
  Let us mention that in part of the above-mentioned literature it is
  assumed that the kernel $K$ is measurable, locally
  square-integrable, Hermitian, positive-definite, and locally of
  trace class.  Indeed, all the results stated above continue to hold
  under this weaker assumption.  However, we will only be interested
  in stationary determinantal point processes, and the assumption of
  continuity is satisfied in all our examples.  Furthermore,
  continuous kernels are convenient in~that they give rise to (unique)
  continuous correlation functions.  \eox
\end{remark}

Henceforward, we shall always assume that the determinantal point
process $\omega$ is \emph{stationary} with mean density $1$.  Then, by
Eq.~\eqref{eq:translation-invariance}, (the continuous versions of)
the first and second correlation functions satisfy
\begin{equation}\label{eq:DPP-first-and-second}
  \CF{1}(x) \, = \, K(x,x) \, = \, 1 
  \quad\text{and}\quad
  \CF{2}(x,y) \, = \, 1 - |K(x,y)|^2 \, = \, 1 - g(x-y)
\end{equation}
for all $x,y \in \RR^d$, where $g(x) := |K(0,x)|^2$.  
Note that $g$ is positive and positive-definite;
for the latter, use that $g(x-y) = |K(x,y)|^2 = K(x,y) \overline{K(x,y)}$
and that the pointwise (or Hadamard) product
of positive-definite kernels is also positive-definite.

\begin{remark}
  Let us emphasise that the stationarity of the determinantal point
  process entails the translation-invariance of the correlation
  functions and of the modulus of the kernel, but not necessarily 
  that of the kernel itself, as for the Ginibre process (see
  Example~\ref{ex:ginibre}).  \eox
\end{remark}

\begin{lemma}\label{lemma:integrability}
  Let\/ $\omega$ be a stationary determinantal point process with a
  kernel\/ $K$ as specified in Eq.~\eqref{eq:main} and with mean
  density\/ $1$, and let\/ $g$ be as in
  Eq.~\eqref{eq:DPP-first-and-second}.  Then, $g$ is integrable with\/
  $\int_{\RR^d} g(y) \d{y} \leq 1$.
\end{lemma}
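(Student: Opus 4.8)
The plan is to read off the bound from the spectral condition in Theorem~\ref{thm:DPP-existence} by comparing two traces of the local operator $\mathcal{K}_B$. First I would fix a compact set $B \subset \RR^d$. By Theorem~\ref{thm:DPP-existence}, the operator $\mathcal{K}_B$ of Eq.~\eqref{eq:operator} is self-adjoint, positive and trace class, with all eigenvalues $\lambda_i$ lying in $[0,1]$. Since $\lambda^2 \leq \lambda$ on $[0,1]$, summing over the eigenvalues gives the operator trace inequality
\[
  \trace\bigl(\mathcal{K}_B^2\bigr) \, = \, \sum_i \lambda_i^2
  \, \leq \, \sum_i \lambda_i \, = \, \trace\bigl(\mathcal{K}_B\bigr) \ts .
\]

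Next I would evaluate both traces as kernel integrals. On the one hand, $\trace(\mathcal{K}_B) = \int_B K(x,x) \d{x} = \leb^d(B)$, because mean density $1$ forces $K(x,x) = \CF{1}(x) = 1$ as in Eq.~\eqref{eq:DPP-first-and-second}. On the other hand, the kernel of $\mathcal{K}_B^2$ is $\int_B K(x,z) K(z,y) \d{z}$, so using that $K$ is Hermitian one obtains $\trace(\mathcal{K}_B^2) = \int_B \int_B |K(x,y)|^2 \d{x} \d{y} = \int_B \int_B g(x-y) \d{x} \d{y}$, again by Eq.~\eqref{eq:DPP-first-and-second}. The inequality above therefore reads
\[
  \int_B \int_B g(x-y) \d{x} \d{y} \, \leq \, \leb^d(B) \ts .
\]

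Then I would specialise to $B = B_n$ and pass to the limit. Substituting $z = x-y$ turns the double integral into $\int_{\RR^d} g(z) \, \leb^d\bigl(B_n \cap (B_n - z)\bigr) \d{z}$, so dividing by $\leb^d(B_n)$ yields
\[
  \int_{\RR^d} g(z) \,
  \frac{\leb^d\bigl(B_n \cap (B_n - z)\bigr)}{\leb^d(B_n)} \d{z}
  \, \leq \, 1 \ts .
\]
Since $g \geq 0$ and the overlap ratio lies in $[0,1]$ and tends to $1$ pointwise in $z$ as $n \to \infty$, Fatou's lemma gives $\int_{\RR^d} g(z) \d{z} \leq 1$, which is the assertion.

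The only genuinely delicate point is the interchange of limit and integral in the final step, and this is handled cleanly by Fatou's lemma precisely because $g = |K(0,\cdot\,)|^2$ is non-negative and the overlap ratios are uniformly bounded by $1$; no integrability of $g$ need be assumed in advance, as it is produced as a by-product of the estimate. Everything else is bookkeeping: the identification of the two traces with kernel integrals rests on $\mathcal{K}_B$ being trace class (hence Hilbert--Schmidt), as recorded after Eq.~\eqref{eq:operator}, together with the Hermitian symmetry of $K$.
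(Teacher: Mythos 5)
Your proof is correct, but it derives the central inequality by a genuinely different mechanism than the paper. The paper's proof is purely probabilistic: it expands $\var(\omega(A))$ using the moment-measure identities $\MM{2}(A \times A) = \MM{1}(A) + \FMM{2}(A\times A)$ and Eq.~\eqref{eq:DPP-first-and-second}, and the single input is the nonnegativity of the variance, which yields $\int_A \int_A g(x-y) \d{y} \d{x} \leq \leb^d(A)$; it then takes $A = B_n$ and passes to the limit exactly as you do. You instead obtain the same inequality from the operator side: existence of the process forces $\mathrm{spec}(\mathcal{K}_B) \subset [0,1]$ by Theorem~\ref{thm:DPP-existence}, hence $\trace(\mathcal{K}_B^2) \leq \trace(\mathcal{K}_B)$, and the two traces are identified with $\int_B\int_B g(x-y)\d{x}\d{y}$ and $\leb^d(B)$. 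The two routes are tightly linked --- for a determinantal process one has precisely $\var(\omega(B)) = \trace(\mathcal{K}_B) - \trace(\mathcal{K}_B^2)$, so both arguments prove the identical estimate --- but the inputs differ: the paper needs only the first two correlation functions and no spectral theory, which makes its argument more elementary and robust under weaker regularity; your version makes the role of the spectral condition of Theorem~\ref{thm:DPP-existence} explicit, which is conceptually illuminating (it shows the bound $\int g \leq 1$ is exactly the trace form of ``eigenvalues $\leq 1$''). One point to flag in your write-up: the identity $\trace(\mathcal{K}_B) = \int_B K(x,x)\d{x}$ is \emph{not} justified by trace-class alone (the diagonal is a null set for a merely measurable kernel); it requires a Mercer-type argument using the continuity and positive-definiteness of $K$ from Eq.~\eqref{eq:main}. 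That assumption holds here, so this is a citation gap rather than a mathematical one, whereas the identification of $\trace(\mathcal{K}_B^2)$ via the Hilbert--Schmidt norm is unconditionally fine. Your explicit Fatou step at the end is the same limiting argument the paper performs, just spelled out more carefully.
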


\begin{proof}
  We use the same argument as in the proof of \cite[Lemma
  4.2.32]{AGZ:2010}.  It follows from the definitions of the ordinary
  and the factorial moment measures and Eq.~\eqref{eq:DPP-first-and-second} 
  that, for any bounded Borel set $A$, one has
\begin{align*}
   0 \, \leq \, \var(\omega(A)) 
   \, & = \, \MM{2}(A \times A) - \big(\MM{1}(A)\big)^2
   \, = \, \MM{1}(A) + \FMM{2}(A \times A) - \big(\MM{1}(A)\big)^2 \quad \\[1mm]
   & = \int_A 1 \d{x} + \int_A \int_A \big(\varrho_2(x,y) - 1\big) \d{y} \d{x} 
   \, = \, \leb^d(A) - \int_A \int_{A+x} g(y) \d{y} \d{x} \ts .
\end{align*}

\noindent{}Taking $A = B_n$, the ball of radius $n$ around the origin,
we get
\[
   \frac{1}{\leb^d(B_{n})} \int_{B_{n}} 
   \int_{B_{n}+x} g(y) \d{y} \d{x} \, \leq \, 1
\]
for any $n \in \NN$, from which it follows that $\int_{\RR^d} g(y)
\d{y} \leq 1 $.
\end{proof}

Combining Lemma~\ref{lemma:integrability} with Corollary~\ref{cor:ET},
we get the following result.

\begin{proposition}\label{prop:determinantal}
  Let\/ $\omega$ be a stationary and ergodic determinantal point process
  with a kernel\/ $K$ as in Eq.~\eqref{eq:main} and with mean density\/
  $1$.  Then, the autocorrelation and diffraction measures of\/
  $\omega$ are given by
\[
  \gamma \, = \, \delta_0 + ( 1 - g ) \, \leb^d
  \qquad\text{and}\qquad
  \widehat\gamma \, = \, \delta_0 + ( 1-\widehat{g}\, ) \, \leb^d \ts ,
\]
  with\/ $g$ as in Eq.~\eqref{eq:DPP-first-and-second}. Moreover, 
  we have\/ $0 \leq \widehat{g}(t) \leq 1$ for all\/ $t \in \RR^d$,
  with $\widehat{g}(t) = 1$ at most for\/ $t = 0$.
  In particular, the absolutely continuous part of the diffraction measure 
  is equivalent to Lebesgue measure.
\end{proposition}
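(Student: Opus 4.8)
The plan is to invoke Corollary~\ref{cor:ET} to obtain the closed forms for $\gamma$ and $\widehat\gamma$, and then to establish the two-sided bound $0 \leq \widehat{g}(t) \leq 1$ together with the sharp characterisation of when the upper bound is attained. For the first part, I would check that the hypotheses of Corollary~\ref{cor:ET} are met: by Eq.~\eqref{eq:DPP-first-and-second} the second factorial moment measure has translation-invariant density $\CF{2}(x,y) = 1 - g(x-y) = \varrho^2 + (-g)(x-y)$ with $\varrho = 1$, and by Lemma~\ref{lemma:integrability} the perturbation $-g$ lies in $L^1(\RR^d)$. Applying Corollary~\ref{cor:ET} with the perturbation $-g$ in place of $g$ yields $\gamma = \delta_0 + (1 - g)\,\leb^d$ and $\widehat\gamma = \delta_0 + (1 - \widehat{g}\,)\,\leb^d$ directly.

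For the bounds, the lower bound $\widehat{g}(t) \geq 0$ is immediate from the fact, noted just before Lemma~\ref{lemma:integrability}, that $g$ is a positive-definite function: by Bochner's theorem its Fourier transform is a positive measure, and since $g \in L^1$ this measure has a nonnegative density $\widehat{g} \geq 0$. The upper bound $\widehat{g}(t) \leq 1$ is the heart of the matter. The natural route is to exploit the operator-theoretic content of Theorem~\ref{thm:DPP-existence}: stationarity makes $|K(x,y)|$ depend only on $x - y$, and $g(x) = |K(0,x)|^2$, so the function $\widehat{g}$ should be interpretable as (a multiple of) the diffraction density, which is already known to be a positive measure; the question is why it does not exceed $1$. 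I would argue via the spectral condition that the integral operators $\mathcal{K}_B$ have spectrum in $[0,1]$; passing to the translation-invariant (formally, Fourier-diagonalised) picture, the kernel $K$ is associated with a ``symbol'' whose values lie in $[0,1]$, and $\widehat{g} = \widehat{|K|^2}$ is a convolution square of that symbol, which a Plancherel/convolution estimate bounds by $1$.

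The main obstacle I expect is making this last diagonalisation rigorous, because $K$ itself need not be translation-invariant (the Remark preceding Lemma~\ref{lemma:integrability} flags exactly this for the Ginibre process): only $|K(x,y)|$ is a function of $x - y$, so one cannot naively Fourier-transform $K$ as a function on $\RR^d$. The cleanest workaround is to avoid diagonalising $K$ and instead bound $\widehat{g}$ through the variance inequality already derived in the proof of Lemma~\ref{lemma:integrability}. Testing $\var(\omega(A)) \geq 0$ against a general nonnegative test function $f$ (rather than just indicators $\pmb{1}_{B_n}$) gives, after the reduction to reduced moment measures, an inequality of the form $\int\!\!\int f(x) g(x - y) f(y)\,\d x\,\d y \leq \int f(x)^2 \,\d x$; by Plancherel this reads $\int \widehat{g}(t)\,|\widehat{f}(t)|^2\,\d t \leq \int |\widehat{f}(t)|^2\,\d t$ for all suitable $f$, forcing $\widehat{g}(t) \leq 1$ for \emph{almost every} $t$, and hence everywhere by continuity of $\widehat{g}$ (which follows from $g \in L^1$).

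Finally, for the rigidity claim that $\widehat{g}(t) = 1$ occurs at most at $t = 0$, I would note that $\int_{\RR^d} \widehat{g}(t)\,\d t$ is constrained — indeed, by the structure $\widehat{g} = \widehat{|K|^2}$ and a Plancherel identity one computes $\int \widehat{g} \leq 1$ matching Lemma~\ref{lemma:integrability}, or one argues directly — so that $\widehat{g}$ cannot equal $1$ on any set of positive measure without violating this normalisation, and then continuity upgrades ``almost nowhere equal to $1$'' to ``equal to $1$ at most at isolated points,'' with $t = 0$ being the expected point where the mass concentrates. The equivalence to Lebesgue measure of the absolutely continuous part then follows at once, since $1 - \widehat{g}(t) > 0$ for almost every $t$, so the density of the absolutely continuous part of $\widehat\gamma$ is strictly positive Lebesgue-almost everywhere.
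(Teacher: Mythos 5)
Your derivation of the formulas for $\gamma$ and $\widehat\gamma$ matches the paper's argument exactly (Corollary~\ref{cor:ET} applied with perturbation $-g$, integrability supplied by Lemma~\ref{lemma:integrability}), and the lower bound $\widehat{g}\geq 0$ via positive-definiteness of $g$ is also what the paper intends. For the upper bound, your variance argument with general test functions is valid, but it is a detour; your instinct to avoid diagonalising $K$ was right (the Ginibre kernel is indeed not translation-invariant), yet no operator theory and no Plancherel argument are needed at all. Since $g\geq 0$, one has directly
\[
  \widehat{g}(t) \, \leq \, |\widehat{g}(t)| \, \leq \int_{\RR^d} g(x) \d{x} \, = \, \widehat{g}(0) \, \leq \, 1 \ts ,
\]
the last inequality being Lemma~\ref{lemma:integrability}; this is what the paper's proof means by ``positivity of $g$ and well-known properties of the Fourier transform''.

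The genuine gap is in the rigidity claim that $\widehat{g}(t)=1$ can occur at most at $t=0$. Your normalisation argument does not work. First, the constraint is misstated: Lemma~\ref{lemma:integrability} gives $\widehat{g}(0)=\int g\leq 1$, whereas $\int \widehat{g}(t)\d{t}=g(0)=1$ by Fourier inversion; these are dual statements, not the same one. Second, and more importantly, neither constraint forbids $\widehat{g}$ from equalling $1$ on a set of positive measure: a function with $0\leq\widehat{g}\leq 1$ can equal $1$ on a set of measure up to $1$ and still have $\int\widehat{g}=1$, so there is no ``violation of normalisation''. Third, even granting $\widehat{g}<1$ almost everywhere, continuity does not upgrade this to ``at most isolated points'' (a continuous function can attain a value on a sphere, say), and nothing in your argument singles out $t=0$. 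The correct argument is an equality analysis in the chain of inequalities above: suppose $\widehat{g}(t)=1$ for some $t\neq 0$. Taking real parts,
\[
  1 \, = \int_{\RR^d} g(x)\cos(2\pi xt) \d{x}
  \, \leq \int_{\RR^d} g(x) \d{x} \, \leq \, 1 \ts ,
\]
so $\int_{\RR^d} g(x)\bigl(1-\cos(2\pi xt)\bigr)\d{x}=0$. Both factors are nonnegative, so $g$ vanishes off the set $\{x\in\RR^d : xt\in\mathbb{Z}\}$, which for $t\neq 0$ is a countable union of hyperplanes and hence Lebesgue-null. Thus $g=0$ almost everywhere, whence $\widehat{g}\equiv 0$, contradicting $\widehat{g}(t)=1$ (equivalently: by continuity $g\equiv 0$, contradicting $g(0)=|K(0,0)|^2=1$). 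With this rigidity in hand, your final sentence on the equivalence of the absolutely continuous part of $\widehat\gamma$ with $\leb^d$ is correct as stated.
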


\begin{proof}
  The statements about $\gamma$ and $\widehat\gamma$ are immediate
  from Corollary~\ref{cor:ET}, Eq.~\eqref{eq:DPP-first-and-second},
  and Lemma~\ref{lemma:integrability}.  The statements about $\widehat{g}$ 
  follow from the positivity and positive-definiteness~of the function $g$
  and well-known properties of the~Fourier transform.
\end{proof}

Let us now turn to the case that the kernel $K$ itself is
\emph{translation-invariant}, which means that there exists a function
$K \! : \, \RR^d \to \CC$ such that $K(x,y) = K(x-y)$ for all $x,y
\in \RR^d$.  (By slight abuse of notation, we use the same symbol for
the function and for the~associated kernel.)  More precisely, we will
assume the following:
\begin{align}\label{eq:DPP-ti}
  \begin{minipage}{0.8\textwidth}
    We have $K(x,y) = K(x-y)$ for all $x,y \in \RR^d$, where 
    the function \linebreak $K \! : \, \RR^d \to \CC$ 
    on the right-hand side is the Fourier transform 
    of a~probability density $\varphi$ on $\RR^d$ 
    with values in $[0,1]$.
  \end{minipage}
\end{align}
Note that Condition \eqref{eq:DPP-ti} entails Condition
\eqref{eq:main}.

\begin{remark}
  It can be shown that, if a kernel $K$ as in Eq.~\eqref{eq:main} is
  translation-invariant, it~is the kernel of a (stationary)
  determinantal point process with mean density $1$ if and only if 
  it~is of the form in Eq.~\eqref{eq:DPP-ti};
  cf. \cite{lavancier-moller-rubak:2012} for a similar result.  \eox
\end{remark}

Let us sketch the argument why a kernel $K$ as in
Eq.~\eqref{eq:DPP-ti} defines a stationary and ergodic determinantal
point process with mean density $1$.  Suppose that
Eq.~\eqref{eq:DPP-ti} holds.  Then, $\varphi \in L^1(\RR^d) \cap
L^2(\RR^d)$, $\widehat\varphi = K \in L^2(\RR^d) \cap
L^\infty(\RR^d)$, and $\widehat{K} = \varphi^{}_{-}$ by Fourier
inversion (in the $L^2$-sense),
where $\varphi^{}_{-}(x) = \varphi(-x)$ as before.
Moreover, since $K \in L^2(\RR^d)$, the convolution
\[
  \bigl(\mathcal{K}f\bigr)(x)\, := \,
  \bigl(K \ast f \bigr)(x) \, := \int_{\RR^{d}} K(x-y) f(y) 
  \d{y} \qquad (x \in \RR^d)
\]
is well-defined for any $f \in L^2(\RR^d)$ by the Cauchy--Schwarz
inequality.  Furthermore, for $f \in L^1(\RR^d) \cap L^2(\RR^d)$, we
have
\[
  \widehat{\mathcal{K} f}\, = \, 
  \widehat{K \ast f} \, = \, 
  \widehat{K} \cdot \widehat{f} \, = \,
  \varphi^{}_{-} \cdot \widehat{f}
\]
and therefore
\begin{equation}\label{eq:multiplication}
   \mathcal{K} f \, = \, 
   \mathcal{F}^{-1} M_{\varphi^{}_{-}} \mathcal{F} f \ts ,
\end{equation}
where $\mathcal{F}$ denotes the Fourier transform on $L^2(\RR^d)$,
$\mathcal{F}^{-1}$ its inverse, and $M_{\varphi^{}_{-}}$
the~multiplication operator $g \mapsto g \varphi^{}_{-}$ on
$L^2(\RR^d)$.  By continuity, this~extends to all of $L^2(\RR^d)$, and
$\mathcal{K}$ is a bounded, self-adjoint, positive-definite
convolution operator on $L^2(\RR^d)$ with spectrum $S_\varphi \subset
[0,1]$, where
\[
  S_\varphi \, := \,
  \{ y \in [0,1] : \leb^d(\{ x \in \RR^d : 
  |\varphi(x)-y| < \varepsilon  \}) > 0 \ 
  \text{for all $\varepsilon > 0$} \} 
\]
is the essential range of $\varphi$.  Thus, the operators
$\mathcal{K}_{B}$ defined in Eq.~\eqref{eq:operator} must also have
spectra contained in $[0,1]$, and it follows from Theorem
\ref{thm:DPP-existence} that $K$ defines a~determinantal point process
$\omega$.

Since the correlation functions determine the distribution of
$\omega$, $\omega$ is clearly stationary with mean density $K(0) = 1$.
Moreover, since $K = \widehat\varphi$ with $\varphi \in L^1(\RR^d)$,
it~follows from the Riemann--Lebesgue lemma that $K(x) \rightarrow 0$
as $|x| \to \infty$, and this implies that $\omega$ is mixing and
ergodic; see \cite[Theorem 7]{soshnikov:2000} or \cite[Theorem
4.2.34]{AGZ:2010}.

We therefore obtain the following consequence of
Proposition~\ref{prop:determinantal}.

\begin{corollary}\label{cor:determinantal}
  If\/ $K$ is a kernel as in Eq.~\eqref{eq:DPP-ti}, the
  autocorrelation and diffraction measures of the associated
  determinantal point process are given by
\[
   \gamma \, = \, \delta_0 + (1 - |K|^2) \, \leb^d
   \qquad\text{and}\qquad
   \widehat\gamma \, = \, 
   \delta_0 + (1-\widehat{|K|^2}) \, \leb^d \ts .
\]
Equivalently,
\[
  \gamma \, = \, \delta_0 + (1 - |\widehat\varphi|^2) \, \leb^d
  \qquad\text{and}\qquad
  \widehat\gamma \, = \,
   \delta_0 + (1-(\varphi \ast \varphi^{}_{-})) \, \leb^d \ts ,
\]
where $\varphi^{}_{-}(x) := \varphi(-x)$ as above.  \qed
\end{corollary}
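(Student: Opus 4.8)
The plan is to derive everything as a specialisation of Proposition~\ref{prop:determinantal}, whose hypotheses are met here: the discussion preceding the corollary shows that a kernel as in Eq.~\eqref{eq:DPP-ti} satisfies Eq.~\eqref{eq:main} and defines a stationary and (even mixing, hence) ergodic determinantal point process of mean density $1$. The one thing to check before quoting the proposition is that, for a translation-invariant kernel, the function $g$ of Eq.~\eqref{eq:DPP-first-and-second} really is $|K|^2$ viewed as a function on $\RR^d$. Indeed, $g(x) = |K(0,x)|^2 = |K(-x)|^2$, and since $\varphi$ is a real-valued probability density, $K = \widehat\varphi$ satisfies $K(-x) = \overline{K(x)}$, whence $|K(-x)| = |K(x)|$ and $g = |K|^2$. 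Substituting $g = |K|^2$ into Proposition~\ref{prop:determinantal} then gives the first displayed pair of formulas for $\gamma$ and $\widehat\gamma$ at once.

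For the equivalent formulation, the expression for $\gamma$ is just the identity $|K|^2 = |\widehat\varphi|^2$, so it remains only to identify $\widehat{g} = \widehat{|K|^2}$ with $\varphi \ast \varphi^{}_{-}$. Here I would write $|K|^2 = K \cdot \overline{K}$ and apply the product-to-convolution rule $\widehat{K\,\overline{K}} = \widehat{K} \ast \widehat{\overline{K}}$; this is legitimate because $K \in L^2(\RR^d)$ forces $|K|^2 \in L^1(\RR^d)$, with both transforms on the right lying in $L^1 \cap L^2$. The two factors are then evaluated by Fourier inversion and the conjugation rule: $\widehat{K} = \varphi^{}_{-}$, already recorded in the discussion before the corollary, and $\widehat{\overline{K}}(y) = \overline{\widehat{K}(-y)} = \overline{\varphi^{}_{-}(-y)} = \overline{\varphi(y)} = \varphi(y)$, where the final step uses that $\varphi$ is real. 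Combining these and using commutativity of convolution yields $\widehat{|K|^2} = \varphi^{}_{-} \ast \varphi = \varphi \ast \varphi^{}_{-}$, as required.

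I expect no conceptual difficulty in this argument; the only delicate point is the bookkeeping of complex conjugates and reflections in the Fourier computation, where a single overlooked sign would spoil the identification of $\widehat{\overline{K}}$ with $\varphi$ (as opposed to $\varphi^{}_{-}$) and hence the final formula. As a safeguard I would run two internal consistency checks. First, $\varphi \ast \varphi^{}_{-}$ is the autocorrelation of $\varphi$ and is therefore automatically even and positive-definite, so that $\widehat\gamma$ comes out positive and positive-definite, exactly as a diffraction measure must. Second, a value check at the origin: on one side $\int_{\RR^d} (\varphi \ast \varphi^{}_{-}) = \bigl(\int_{\RR^d}\varphi\bigr)^2 = 1$, and on the other $g(0) = |K(0)|^2 = 1$, which agree via $\int_{\RR^d}\widehat{g} = g(0)$, so the claimed identity $\widehat{g} = \varphi \ast \varphi^{}_{-}$ is at least consistent with the normalisation coming from Lemma~\ref{lemma:integrability}.
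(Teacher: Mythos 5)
Your proposal is correct and takes essentially the same route as the paper, which states the corollary without separate proof as an immediate consequence of Proposition~\ref{prop:determinantal} together with the discussion following Eq.~\eqref{eq:DPP-ti} (stationarity, ergodicity, mean density $1$, and $\widehat{K} = \varphi^{}_{-}$). The details you supply --- the identification $g = |K|^2$ and the computation $\widehat{|K|^2} = \widehat{K} \ast \widehat{\overline{K}} = \varphi \ast \varphi^{}_{-}$ via the $L^2$ product--convolution identity --- are exactly the steps the paper leaves implicit (and itself invokes in Remark~\ref{rem:self-reproducing}).
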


\begin{remark}[Self-reproducing kernels]\label{rem:self-reproducing}
  Suppose that, in the situation of
  Proposition~\ref{prop:determinantal}, the kernel $K$ is
  \emph{self-reproducing} in the sense that
\begin{equation}\label{eq:self-reproduction}
   \int_{\RR^d} K(x,y) K(y,z) \d{y} \, = \, K(x,z)
\end{equation}
for all $x,z \in \RR^d$ or, equivalently, the associated integral
operator $\mathcal{K}$ on $L^2(\RR^d)$ (defined similarly as in
\eqref{eq:operator}, but with $B = \RR^d$) is a~projection, \ie
$\mathcal{K}^2 = \mathcal{K}$.  (Let~us~mention without proof that
$\mathcal{K}$ is indeed a well-defined operator on $L^2(\RR^d)$, as
follows from Theorem~\ref{thm:DPP-existence} and Lemma
\ref{lemma:integrability}.)  Then, with $g$ as in
Eq.~\eqref{eq:DPP-first-and-second}, we~have
\[
  \widehat{g}(0) \, = \int |K(0,x)|^2 \d{x} 
   \, = \int K(0,x) K(x,0) \d{x}
   \, = \, K(0,0) \, = \, 1 \ts ,
\]
so that the density of the absolutely continuous part of the
diffraction measure equals zero at the origin.

Moreover, for a translation-invariant kernel as in
Eq.~\eqref{eq:DPP-ti}, the converse is also true.  Indeed, in this
case, we have $\widehat{|K|^2}(0) = 1$ if and only if $\widehat{K}(t)$
is an~indicator function, as already pointed out in
\cite{soshnikov:2000}.  For the convenience of the reader, let us
reproduce the argument here: Using that $\widehat{f_1 f_2} = \widehat{f_1}
\ast \widehat{f_2}$ for $f_1,f_2 \in L^2(\RR^d)$ and that $\widehat{K} =
\varphi^{}_{-}$ is $[0,1]$-valued, we~obtain
\[
  \widehat{|K|^2}(0)
  \, =  \, \bigl(\widehat{K} \ast \widehat{\overline{K}}\,\bigr)(0)
  \, = \int \widehat{K}(t) \overline{\widehat{K}(t)} \, \d{t}
  \, = \int (\widehat{K}(t))^2 \, \d{t} 
  \, \leq \int \widehat{K}(t) \, \d{t}
  \, = \,  K(0) \, = \, 1 \ts ,
\]
with equality if and only if $\widehat{K} = 1$ holds a.e.\ on the set
$\{ \widehat{K} \ne 0 \}$.  Since indicator functions correspond to
projection operators by Eq.~\eqref{eq:multiplication}, this proves the
claim.  \eox
\end{remark}

\smallskip

By \cite[Corollary 4.2.23]{AGZ:2010},
`thinnings' of determinantal point processes
are again determinantal point processes.
This leads to the following observation.

\begin{remark}[Thinned determinantal point processes]\label{rem:thinning}
  Let $\omega$ be a determinantal point process on $\RR^d$ with a
  kernel $K$ as in Eq.~\eqref{eq:main}, let $0 < p \leq 1$, and
  let~$\omega_p$ denote the point process obtained from $\omega$ by
  (i) deleting each point with probability $1-p$, independently of one
  another, and (ii) rescaling the resulting point process so~that the
  mean density becomes $1$.  Then, $\omega_p$ is the determinantal
  point process associated with the kernel $K_p(x,y) :=
  K(x/p^{1/d},y/p^{1/d})$, as follows from \cite[Corollary
  4.2.23]{AGZ:2010}.  Thus, each determinantal point process $\omega$
  gives rise to an entire family $(\omega_p)_{0 < p \leq 1}$ of
  determinantal point processes.

  Furthermore, if $\omega$ is stationary and ergodic, $\omega_p$ is
  also stationary and ergodic, and if $g$ is defined as in
  Eq.~\eqref{eq:DPP-first-and-second}, and $g_p$ is the analogous
  function for $\omega_p$, we~have $g_p(x) = g(x/p^{1/d})$ and
  $\widehat{g}_p(t) = p \widehat{g}(tp^{1/d})$.  Therefore, by
  Proposition~\ref{prop:determinantal}, the autocorrelation and
  diffraction measures of $\omega_p$ are given by
\[
  \gamma_p \, = \, \delta_0 + \bigl(1 - g(x/p^{1/d})\bigr) \ts \leb^d
  \quad \text{and}\quad
  \widehat\gamma_p \, = \, 
  \delta_0 + \bigl(1-p \, \widehat{g}(tp^{1/d})\bigr) \ts \leb^d \ts .
\]
As $p \to 0$, the repulsion between the points decreases,
and the point process converges in distribution
to the homogeneous Poisson process with intensity $1$.

Finally, note that if $K$ is a translation-invariant kernel as in
Eq.~\eqref{eq:DPP-ti}, the same holds for $K_p$.  More precisely, if
$K$ is the Fourier transform of the probability density $\varphi(t)$
(with values in $[0,1]$), then $K_p$ is the Fourier transform of the
probability density $\varphi_p(t) := p \, \varphi(tp^{1/d})$ (with
values in $[0,p]$).  In this case, the formulas for
the~auto\-correlation and diffraction measures reduce to
\[
  \gamma \, = \, \delta_0 + \bigl(1 - |K|^2(x/p^{1/d})\bigr) \ts \leb^d 
  \quad\text{and}\quad
  \widehat\gamma \, = \,
   \delta_0 + \bigl(1-p \, \widehat{|K|^2}(tp^{1/d})\bigr) \ts \leb^d \ts ,
\]
as can easily be checked.  \eox
\end{remark}

Evidently, the construction below Eq.~\eqref{eq:DPP-ti} gives rise to
a large number of ex\-amples.  Let~us~mention some particularly
interesting cases.

\begin{example}[Sine process]\label{ex:sine}
  An important example is given by the sine process, which corresponds
  to $d = 1$, $K(x) = \tfrac{\sin (\pi x)}{\pi x}$ and $\varphi(t) =
  \pmb{1}_{[-1/2,+1/2]}(t)$.  In~this~case, the autocorrelation and
  diffraction measures are given by
\[
  \gamma \, = \, 
  \delta_0 + \bigl(1 - (\tfrac{\sin (\pi x)}{\pi x})^2 \bigr) \ts \leb
  \quad\text{and}\quad  \widehat\gamma \, = \, 
  \delta_0 + \bigl(1 - \max \{0, 1\!- \! |t| \} \bigr) \ts \leb \ts .
\]
This example arises in connection with the local eigenvalue statistics
of the Gaussian Unitary Ensemble (GUE) in random matrix theory
\cite{AGZ:2010,HKPV:2009,soshnikov:2000}, and is discussed from the
viewpoint of diffraction theory in \cite{baake-koesters:2011}.

By Remark~\ref{rem:thinning}, the sine process gives rise to a whole
family of determinantal point processes, with $K_p(x) = \tfrac{\sin
  (\pi x / p)}{\pi x / p}$ and $\varphi_p(t) = p \,
\pmb{1}_{[-1/(2p),+1/(2p)]}(t)$, where $0 < p \leq 1$.  The
autocorrelation and diffraction measures are now given by
\[
  \gamma_p  \, = \,
   \delta_0 + \bigl(1 - (\tfrac{\sin (\pi x / p)}{\pi x / p})^2\bigr)
   \ts \leb
  \quad\text{and}\quad
  \widehat{\gamma_p} 
  \, = \, \delta_0 + \bigl(1 - p \ts \max \{0, 1 \! - \!\ts p |t| \} 
  \bigr) \ts \leb \,.
\]
Note that for $p > 1$, the function $K_p$ does not give rise to a
determinantal point process, as the condition $0 \leq \varphi_p \leq
1$ is~violated.  Thus, the sine process ($p = 1$) is the point process
with the strongest repulsion in this determinantal family, and
it~seems to be the only member of this family arising in random matrix
theory.  \eox

\begin{figure}
\includegraphics[width=0.48\textwidth]{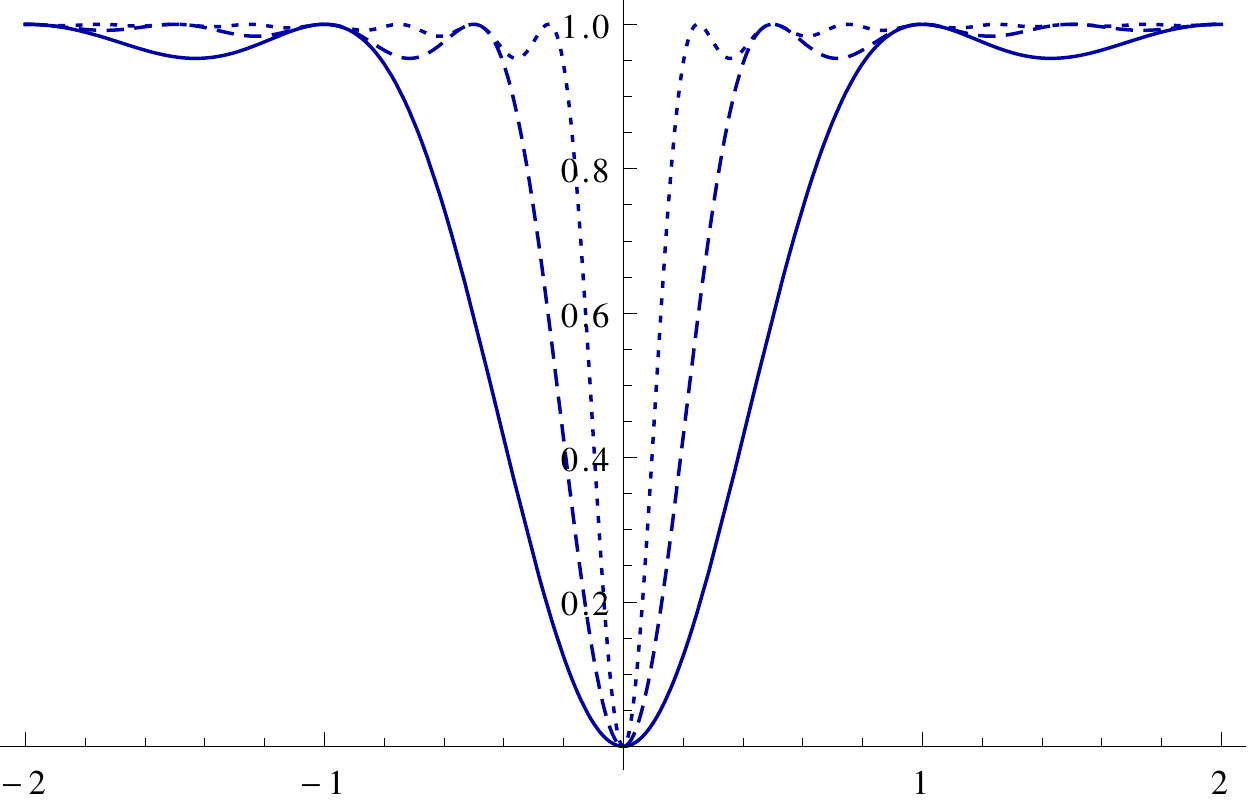}
\quad
\includegraphics[width=0.48\textwidth]{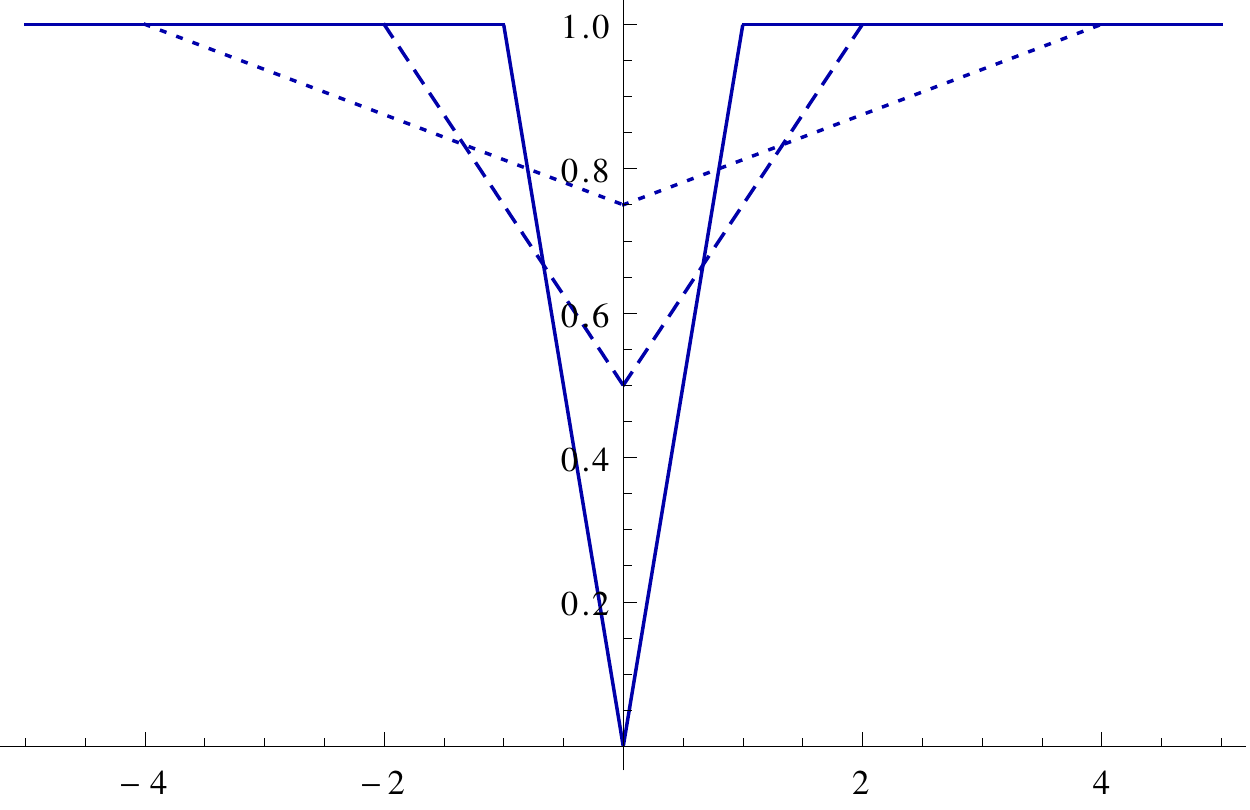}
\caption{The absolutely continuous parts of the autocorrelation (left)
  and diffraction (right) measures of the thinned sine~process for
  $p=1$ (normal), $p=0.5$ (dashed) and $p=0.25$ (dotted).}
\end{figure}
\end{example}

\begin{example}
  Let $d \in \NN$, let $\varphi$ denote the density of the uniform
  distribution on~the $d$-dimensional ball of volume $1$ centered at
  the origin, and let $K := \widehat\varphi$.  Then it is well known
  that $ K(x) = \alpha^{-1/2} |x|^{-d/2} J_{d/2}(2 \pi \alpha^{-1/d}
  |x|), $ where $\alpha := \leb^d(B_1)$ denotes the volume of the
  $d$-dimensional unit ball $B_1$.  Thus, the~auto\-correlation and
  diffraction measures are given by
\[
  \gamma \, = \, 
  \delta_0 + \Bigl( 1 - \alpha^{-1} |x|^{-d} \bigl( J_{d/2}
  (2 \pi \alpha^{-1/d} |x|)\bigr)^2 \Bigr) \ts \leb^d
\]
and
\[
  \widehat\gamma \, = \, 
  \delta_0 + \Bigl( 1 - (\varphi \ast \varphi)(t) \Bigr) \ts \leb^d \ts .
\]
Here $J_{d/2}$ is the Bessel function of the first kind of order $d/2$.
Note that for $d = 1$, we recover the sine process.
\eox
\end{example}

Here is another example of a rotation-invariant kernel.

\begin{example}
  Take $d \in \NN$, $K(x) = e^{-\pi |x|^2}$ and $\varphi(t) = e^{-\pi
    |t|^2}$.  In this case, the auto\-correlation and diffraction
  measures are given by
\[
  \gamma \, = \, 
  \delta_0 + \left( 1 - e^{-2 \pi |x|^2} \right)  \leb^d
  \quad\text{and}\quad
  \widehat\gamma \, = \, 
  \delta_0 + \left( 1 - \left(\tfrac{1}{2}\right)^{d/2} 
  e^{-\pi |t|^2 / 2} \right)  \leb^d \ts ,
\]
by an application of Corollary~\ref{cor:determinantal}.
\eox
\end{example}

Note that the pair $(\gamma,\widehat\gamma)$ comes close to being
self-dual here.  It seems natural to~try to obtain a genuinely
self-dual pair $(\gamma,\widehat\gamma)$ by appropriate rescaling.
However, this would require the transformations 
$e^{-\pi|x|^2} \to e^{-\pi|x|^2/2}$
for the function $K$ and 
$e^{-\pi|t|^2} \to 2^{d/2} e^{-2\pi|t|^2}$
for its Fourier transform $\widehat{K}$, 
and this is not allowed as the~spectrum 
of the corresponding convolution operator
$\mathcal{K}$ is no longer contained in the interval $[0,1]$.

Nevertheless, at least for $d = 2$, there does exist a stationary
determinantal point process with a self-dual pair
$(\gamma,\widehat\gamma)$ of the desired form, although one not coming
from a translation-invariant kernel.

\begin{example}[Ginibre process]\label{ex:ginibre}
  On $\RR^2 \simeq \CC$, consider the kernel
\[
  K(z,w) \, = \, \exp(-\tfrac12 \pi|z|^2 - \tfrac{1}{2} 
   \pi|w|^2+\pi z\overline{w}) \ts .
\]
This kernel is \emph{not} translation-invariant, but one can show that
it still defines a~determinantal point process that is stationary and
ergodic.  By Proposition~\ref{prop:determinantal}, the autocorrelation
and diffraction measures are given by
\[
  \gamma \, = \, \delta_0 + \bigl(1 - e^{-\pi|x|^2}\bigr) \ts \leb^2
  \quad\text{and}\quad
  \widehat\gamma \, = \, \delta_0 + 
  \bigl(1 - e^{-\pi|t|^2} \bigr) \ts \leb^2 \ts .
\]
Note that the pair $(\gamma,\widehat\gamma)$ is self-dual here. 
Note also that the diffraction density vanishes at the origin.  
(Indeed, the integral operator~$\mathcal{K}$ determined by the kernel~$K$ 
is a projection operator here.)  This example arises in connection with 
the local eigenvalue statistics of the Ginibre Ensemble in random matrix
theory \cite{AGZ:2010,HKPV:2009,soshnikov:2000}, and is discussed from
the view\-point of diffraction theory in \cite{baake-koesters:2011}.

Similarly as above, by Remark~\ref{rem:thinning}, we may also consider
thinned versions of the Ginibre process.  Here, the autocorrelation
and diffraction measures are given by
\[
  \gamma \, = \, \delta_0 + \bigl(1 - e^{-\pi |x|^2 /p}\bigr) \ts \leb^2
  \quad\text{and}\quad
  \widehat\gamma \, = \, \delta_0 + 
  \bigl(1 - p e^{-\pi p |t|^2} \bigr) \ts \leb^2 \ts ,
\]
via the usual reasoning. \eox

\begin{figure}
\includegraphics[width=0.48\textwidth]{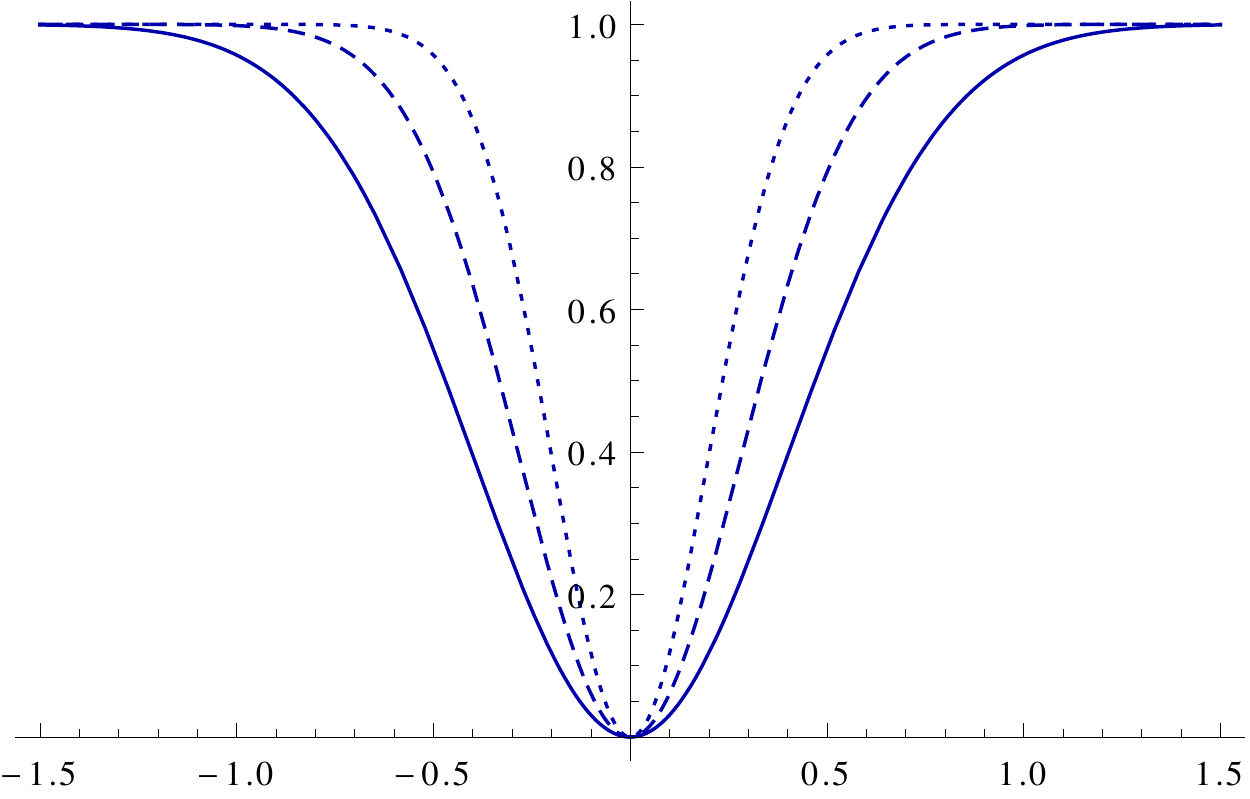}
\quad
\includegraphics[width=0.48\textwidth]{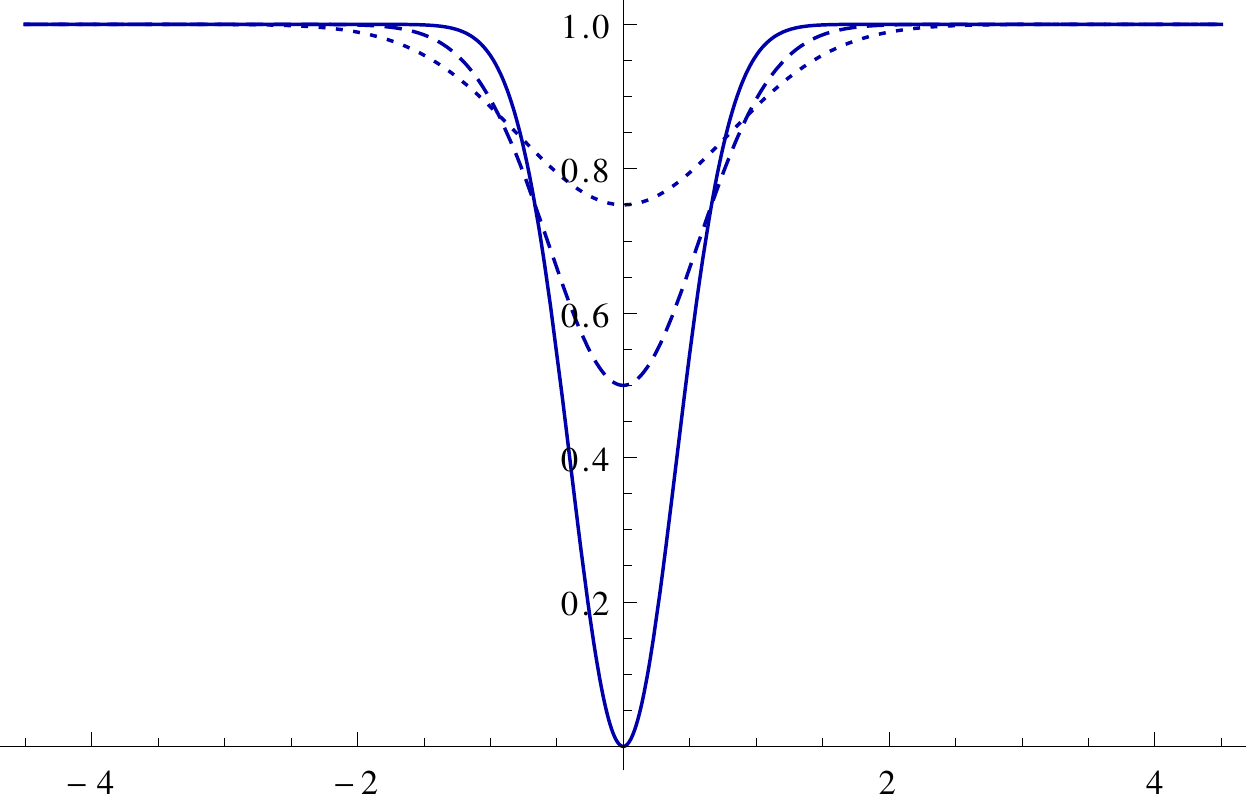}
\caption{The absolutely continuous parts of the autocorrelation (left)
  and diffraction (right) measures (viewed along a line through the
  origin) of the thinned Ginibre~process for $p=1$ (normal), $p=0.5$
  (dashed) and $p=0.25$ (dotted).}
\end{figure}
\end{example}

\begin{example}[Renewal process]
  Another interesting example is given by the class of those
  stationary determinantal point processes which are simultaneously
  renewal processes; see \cite[Section 2.4]{soshnikov:2000} and
  references given there.  Here, $d = 1$, $K(x) = \exp(-|x|/\alpha)$
  and $\varphi(t) = \frac{2\alpha}{1+(2\pi\alpha t)^2}$, where $0 <
  \alpha \leq \tfrac12$.  The density of the increments of the
  associated renewal process is given by
\[
  f_\alpha(x) \, = \, \frac{2}{\sqrt{1-2\alpha}} e^{-x/\alpha} 
  \sinh(\sqrt{1-2\alpha} (x/\alpha)) \, \pmb{1}_{(0,\infty)}(x) \ts ,
\]
see \cite[Eq.~2.42]{soshnikov:2000}.  The autocorrelation and
diffraction measures are~given~by
\[
  \gamma_\alpha \, = \, \delta_0 + 
  \bigl(1 - \exp(-2|x|/\alpha)\bigr) \ts \leb
  \quad\text{and}\quad
  \widehat{\gamma_\alpha} \, = \, \delta_0 + 
  \bigl(1 - \tfrac{\alpha}{1 + \pi^2 \alpha^2 t^2}\bigr) 
  \ts \leb \ts .
\]
Of course, this can also be obtained from the density $f_\alpha$ and
\cite[Theorem 1]{baake-birkner-moody:2010}, which provides formulas
for the autocorrelation and diffraction measures of general renewal
processes.

Similarly to what we saw above, this family of point processes approaches 
the homo\-geneous Poisson process as $\alpha \to 0$, while the kernel 
does not define a determinantal point process for $\alpha > 1/2$.
Note also that for $\alpha = 1/2$ the distribution of the increments is 
the gamma distribution 
with~the density $4x e^{-2x}$
and that all other members of the family can be obtained from the
associated determinantal point process by the thinning procedure
described in Remark~\ref{rem:thinning}.  \eox

\begin{figure}
\includegraphics[width=0.48\textwidth]{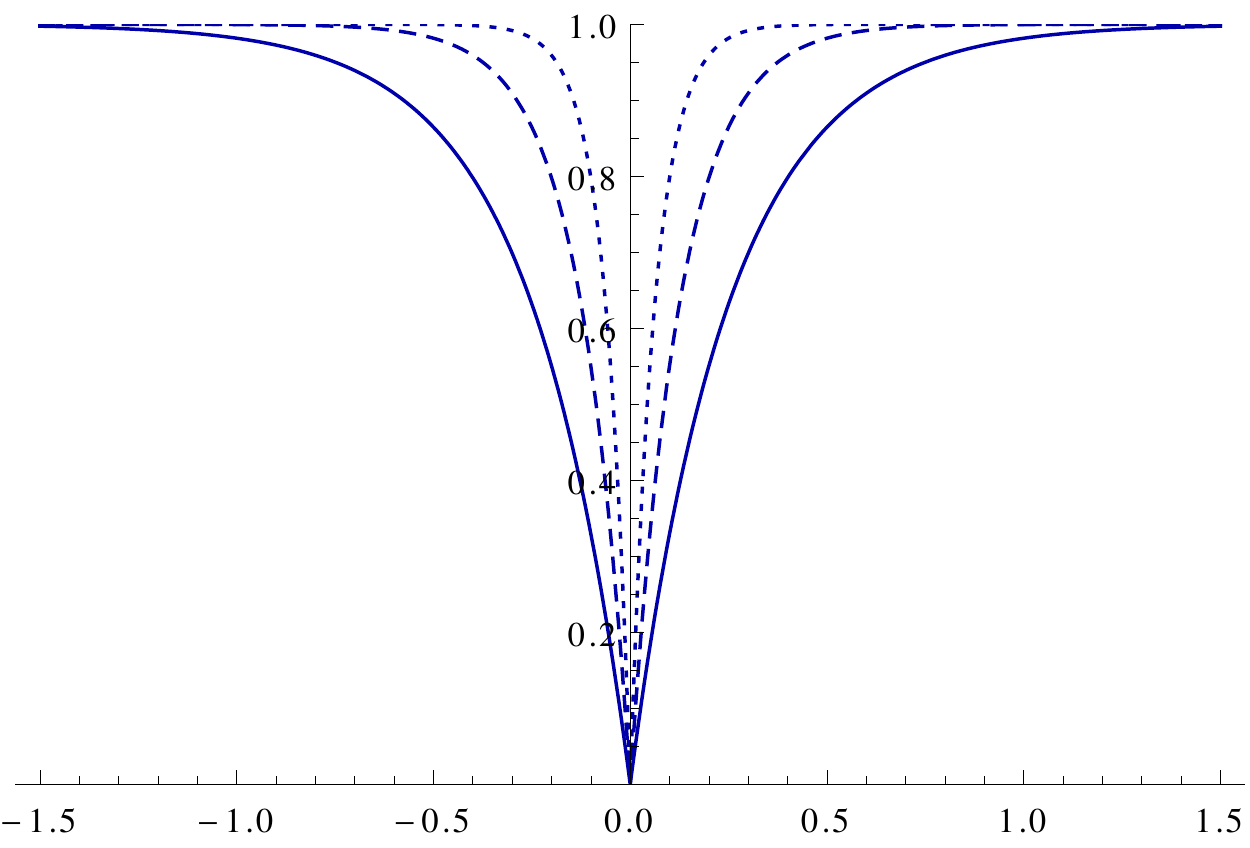}
\quad
\includegraphics[width=0.48\textwidth]{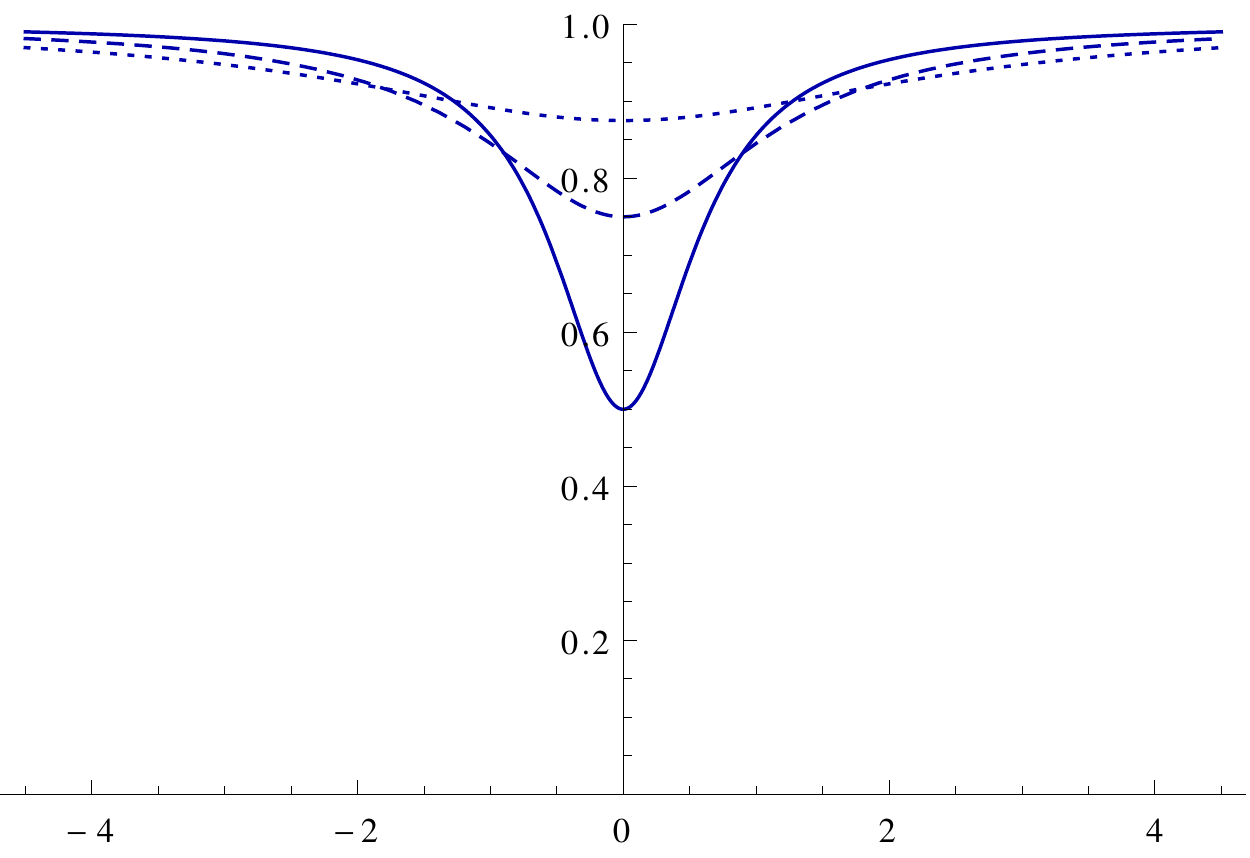}
\caption{The absolutely continuous parts of the autocorrelation (left)
  and diffraction (right) measures of the renewal~process for
  $\alpha=0.5$ (normal), $\alpha=0.25$ (dashed) and $\alpha=0.125$
  (dotted).}
\end{figure}
\end{example}

\begin{example}
  Let $Q_1$ be the Poisson distribution with parameter $1$, let $Q_2$
  be the compound Poisson distribution with parameter $1$ and
  compounding distribution $\tfrac12 \delta_{-1} + \tfrac12
  \delta_{+1}$, and let
\[
  \varphi^{}_1(x) \, := \int_{\RR} \pmb{1}_{[-1/2,+1,2]}(x-y) \d{Q_1}(y)
  \quad\!\text{and}\!\quad
  \varphi^{}_2(x) \, := \int_{\RR} \pmb{1}_{[-1/2,+1,2]}(x-y) \d{Q_2}(y) \,.
\]
Then, $\varphi^{}_1$ and $\varphi^{}_2$ are probability densities
bounded by $1$, and the associated functions $K_1$ and $K_2$ read
\[
   K_1(x) \, = \, \exp\big(e^{-2\pi \i x}-1\big)\, \tfrac{\sin (\pi x)}{\pi x}
  \quad\text{and}\quad
  K_2(x) \, = \, \exp\big(\cos (2\pi x)-1\big)\, \tfrac{\sin (\pi x)}{\pi x} \ts .
\]
Since
\[
  |K_1(x)|^2 \, = \,
   \exp \bigl(2\cos (2\pi x)-2 \bigr) 
   \bigl(\tfrac{\sin \pi x}{\pi x}\bigr)^2 
  \, = \, |K_2(x)|^2 \,,
\]
it follows that the associated stationary determinantal point processes
have the same autocorrelation and diffraction measures.
However, the point processes them\-selves are \emph{not} the same,
as can be verified by comparing the 3-point correlation functions.

Thus, even within the restricted class of determinantal point processes 
with a translation-invariant kernel as in Eq.~\eqref{eq:DPP-ti},
the \emph{inverse problem} to reconstruct the distribution of a point process
from its diffraction measure does not have a~unique solution;
see \cite{baake-grimm:2013} for background information and other examples.
\eox
\end{example}

\begin{remark}[Diffraction spectrum versus dynamical spectrum]
  \label{rem:dynamical-spectrum}
  Let $\omega$ be a stationary and ergodic point process for which the
  first and second moment measures exist.  Then, the diffraction
  measure (or rather its equivalence class) is also called the
  \emph{diffraction spectrum} of $\omega$, whereas the \emph{dynamical
    spectrum} of $\omega$ is the~spectrum of the dynamical system
  defined by the shift operators $T_x$, $x \in \RR^d$, on
  $(\mathcal{N}(\RR^d),\mathscr{N}(\RR^d),\pp_\omega)$.  
  More precisely, the dynamical spectrum may be defined as the maximal
  spectral type of the group of unitary operators $f \mapsto f \circ
  T_x$ on $L^2(\mathcal{N}(\RR^d),\mathscr{N}(\RR^d),\pp_\omega)$; 
  compare \cite{CFS} or \cite{queffelec} for details.
  It is of interest in diffraction theory to clarify 
  the relationship between the diffraction spectrum
  and the dynamical spectrum; see \cite{baake-lenz-vanenter:2013}
  and references therein for background information.

  Let us consider the diffraction spectrum and the dynamical spectrum
  for a determinantal point process with a translation-invariant
  kernel as in Eq.~\eqref{eq:DPP-ti}.  On the one hand, as we have
  seen above, the diffraction spectrum is equivalent to $\delta_0 +
  \leb^d$.  On the other hand, the~dynamical spectrum is also
  equivalent to $\delta_0 + \leb^d$.  Indeed, it was shown
  in~\cite{soshnikov:2000} that the determinantal point process is
  absolutely continuous (when viewed as a dynamical system), so that
  the dynamical spectrum is dominated by $\delta_0 + \leb^d$.
  Furthermore, as also shown in \cite{soshnikov:2000}, the centred
  linear statistics $\omega \mapsto \int f(x) \d\omega(x) - \ee(\int
  f(x) \d\omega(x))$, where $f \in \mathscr{C}_{c}^{\infty}(\RR^d)$,
  possess the spectral measure $(1-\widehat{|K|^2}(t)) \,
  |\widehat{f}(t)|^2 \, \lambda^d$, which implies that the dynamical
  spectrum must be equivalent to $\delta_0 + \leb^d$.  Thus, 
  the `diffraction spectrum' and the `dynamical spectrum' are equivalent here.  
  
  It seems interesting to ask whether this is a general property
  of translation-invariant point processes with ``good'' mixing properties.
  Let us mention here that the determinantal point process is not only mixing, 
  but even mixing of all orders; see~\cite{soshnikov:2000}.
  \eox
\end{remark}

\section{Permanental point processes}
\label{sec:PPP}

We now turn to \emph{permanental point processes} where the
correlation functions are given by the permanent instead of the
determinant of a certain kernel.  For such processes, the particles
tend to form clumps, whereas they repel one another for determinantal
point processes.  See \cite[Section~4.9]{HKPV:2009} for background
information.  See also \cite{macchi:1975} and \cite[Example
6.2\,(b)]{DV1}, where these point processes are called \emph{boson
  processes}.

Let $K$ be a kernel as in Eq.~\eqref{eq:main}.  A point process
$\omega$ on $\RR^d$ is called \emph{permanental} with kernel $K$ 
if, for any $k \in \NN$, the $k$-point correlation function exists 
and is given by
\begin{equation}\label{eq:PPP-def}
   \CF{k}(x_1,\hdots,x_k) \, = \, 
   \per \bigl( K(x_i,x_j)^{}_{1 \leq i,j \leq k} \bigr) ,
\end{equation}
where $\per$ denotes the permanent; 
compare \cite[Definition 2.1.5]{HKPV:2009}.

We refer to \cite[Section~4.9]{HKPV:2009} for the proof of the
following result.

\begin{proposition}[{\cite[Corollary
    4.9.3]{HKPV:2009}}]\label{prop:PPP-existence}
  For any kernel\/ $K$ as in Eq.~\eqref{eq:main}, there exists a
  permanental point process with kernel\/ $K$ on\/ $\RR^d$.  \qed
\end{proposition}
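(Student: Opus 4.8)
The plan is to realise the permanental point process as a \emph{Cox process} (a doubly stochastic Poisson process) whose random intensity is the squared modulus of a complex Gaussian field with covariance~$K$. The key identity is the complex version of Wick's theorem: for a centred, proper (circularly symmetric) complex Gaussian field, the mixed moments of the squared moduli are exactly permanents of the covariance, which is precisely the defining relation~\eqref{eq:PPP-def}.

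First I would construct the field. Since $K$ is Hermitian and positive-definite in the sense of~\eqref{eq:main}, each matrix $\bigl(K(x_i,x_j)\bigr)_{1\leq i,j\leq k}$ is positive semidefinite (passing to subsets of the $x_i$ shows that all of its principal minors are nonnegative). Hence the finite-dimensional distributions of a centred, proper complex Gaussian field with $\ee\bigl(X(x)\overline{X(y)}\bigr) = K(x,y)$ and $\ee\bigl(X(x)X(y)\bigr) = 0$ are consistent, and Kolmogorov's extension theorem produces such a field $(X(x))_{x\in\RR^d}$. Because $K$ is continuous, $X$ is mean-square continuous and therefore admits a jointly measurable modification, which I use henceforth.

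Next I would direct a Poisson process by $X$. Put $\Lambda(A) := \int_A |X(x)|^2 \d{x}$ for bounded Borel sets $A$; since $\ee\bigl(\Lambda(A)\bigr) = \int_A K(x,x)\d{x} < \infty$ by continuity of $K$, the random measure $\Lambda$ is almost surely locally finite. Let $\omega$ be the Cox process directed by $\Lambda$, \ie conditionally on $X$ a Poisson process with intensity measure $\Lambda$. For such a Cox process the $k$-point correlation function is the $k$th moment of the intensity density, so $\CF{k}(x_1,\ldots,x_k) = \ee\bigl(\prod_{i=1}^k |X(x_i)|^2\bigr) = \ee\bigl(\prod_{i=1}^k X(x_i)\,\overline{X(x_i)}\bigr)$. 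Applying the complex Wick formula, this equals $\sum_{\sigma} \prod_{i=1}^k K(x_i,x_{\sigma(i)}) = \per\bigl(K(x_i,x_j)_{1\leq i,j\leq k}\bigr)$, where $\sigma$ ranges over the permutations of $\{1,\ldots,k\}$. This is exactly~\eqref{eq:PPP-def}, so $\omega$ is the sought permanental point process.

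The algebra is not the difficulty: the Wick computation is immediate and the permanent falls out at once. The main obstacle is the measure-theoretic bookkeeping --- producing a genuinely pointwise, jointly measurable version of $X$, checking that $|X|^2$ defines an almost surely locally finite random intensity, and justifying the Cox-process correlation formula together with the requisite finiteness and local integrability of the resulting $\CF{k}$. In each of these steps the continuity of $K$ is what makes the argument go through.
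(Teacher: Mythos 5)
Your construction is correct and coincides with the proof the paper points to: \cite[Section~4.9]{HKPV:2009} establishes existence exactly by realising the permanental process as a Cox process directed by $|X|^2\,\leb^d$ for a proper complex Gaussian field $X$ with covariance $K$, and then invoking Wick's formula to identify the correlation functions with permanents. The paper itself sketches this same argument (for translation-invariant kernels with $K(0)=1$) in the remark on Cox processes in Section~5, so your route matches both the cited source and the paper's own outline, including the measure-theoretic care about measurable versions of $X$ and local finiteness of the random intensity.
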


Furthermore, it is not hard to see that if there exists a permanental
point process $\omega$ with a given kernel $K$ as in
Eq.~\eqref{eq:main}, its distribution is uniquely determined.  This
follows from the observation that, for any bounded Borel set $A$, 
the probability generating function $z \mapsto \ee(z^{\omega(A)})$ 
exists in an open neighborhood of the unit ball; 
see \cite[Theorem 6]{macchi:1975}.

\begin{remark}
  Let us note here that the probability generating function $z \mapsto
  \ee(z^{\omega(A)})$ is an entire function for determinantal, but generally 
  not for permanental point processes.  The reason for this difference is
  that determinants satisfy Hadamard's inequality, while there is no
  comparable estimate for permanents.  \eox
\end{remark}

Of course, we will be interested in permanental point processes which
are also stationary with mean density $1$.  For brevity, let us
directly turn to permanental point processes with
translation-invariant kernels.  More precisely, we shall assume the
following:
\begin{align}
\label{eq:PPP-ti}
\begin{minipage}{0.8\textwidth}
  $K(x,y) = K(x-y)^{}_{\vphantom{I}}$ holds for all $x,y \in \RR^d$,
  where the function $K \! : \, \RR^d \to \CC$ on the
  right-hand side is the Fourier transform of a~probability density
  $\varphi$ on $\RR^d$.
\end{minipage}
\end{align}

Note that, in contrast to Condition \eqref{eq:DPP-ti}, the probability
density need not be bounded here.  However, Condition
\eqref{eq:PPP-ti} still implies Condition \eqref{eq:main}.

Now suppose that Condition \eqref{eq:PPP-ti} holds.  Let $\omega$ 
denote the associated permanental point process, which exists by
Proposition \ref{prop:PPP-existence}.  Since the correlation functions
determine the dis\-tribution of $\omega$, $\omega$ is clearly
stationary with mean density $K(0) = 1$.  Moreover, by the
Riemann--Lebesgue lemma, we have $K(x) \longrightarrow 0$ as $|x| \to
\infty$, and~a~variation of the proof of
\cite[Theorem~7]{soshnikov:2000} or \cite[Theorem 4.2.34]{AGZ:2010}
shows that $\omega$ is also mixing and hence ergodic.

Furthermore, by Eq.~\eqref{eq:translation-invariance}, (the continuous
versions of) the first and second correlation functions of $\omega$
satisfy
\begin{equation}\label{eq:PPP-first-and-second}
   \CF{1}(x_1) \, = \, K(0) \, = \, 1 
   \quad\text{and}\quad
   \CF{2}(x_1,x_2) \, = \, 1 + \bigl|K(x_1-x_2)\bigr|^2 \ts .
\end{equation}
Therefore, similarly as in Corollary~\ref{cor:ET}, 
we have the following result:

\begin{proposition}\label{prop:permanental}
  Let $K$ be a kernel as in Eq.\@ \eqref{eq:PPP-ti}.  Then, the
  auto\-correlation and diffraction measures of the associated
  permanental point process $\omega$ are given by
\[
   \gamma \, = \, \delta_0 + \bigl(1 + |K|^2 \bigr) \ts \leb^d 
   \quad\text{and}\quad
   \widehat\gamma \, = \, \delta_0 + \bigl( 1+
    (\varphi \ast \varphi^{}_{-}) \bigr) \ts \leb^d \ts ,
\]
where $\varphi^{}_{-}(x) := \varphi(-x)$ as before.
\end{proposition}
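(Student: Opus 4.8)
The plan is to mimic the structure of the determinantal case, using Corollary~\ref{cor:ET} as the main engine. The essential observation is that Eq.~\eqref{eq:PPP-first-and-second} already puts the second correlation function into exactly the shape required by hypothesis \eqref{eq:extra-2} of Corollary~\ref{cor:ET}. Indeed, with mean density $\varrho = 1$, we may rewrite
\[
  \CF{2}(x_1,x_2) \, = \, 1 + \bigl|K(x_1-x_2)\bigr|^2
  \, = \, \varrho^2 + g(x_2-x_1)
\]
upon setting $g(x) := |K(x)|^2 = |K(-x)|^2$, so that $g$ is the relevant correlation function and the decomposition matches \eqref{eq:extra-2}. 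The first thing I would verify is therefore that this $g$ lies in $L^1(\RR^d)$, which is precisely the integrability hypothesis that Corollary~\ref{cor:ET} demands.

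The key step, and the one requiring genuine input, is establishing $g \in L^1(\RR^d)$. Here the argument diverges from the determinantal case: in Lemma~\ref{lemma:integrability} integrability of $g$ came from a variance computation yielding $\int g \leq 1$, but that bound exploited the negative correlation intrinsic to determinantal processes and is unavailable here. Instead I would argue directly from Condition~\eqref{eq:PPP-ti}. Since $K = \widehat\varphi$ with $\varphi$ a probability density, we have $K \in L^\infty(\RR^d)$ with $\|K\|_\infty \leq K(0) = 1$; it remains to control $K$ at infinity. The cleanest route is to observe that $\varphi \in L^1(\RR^d)$ gives $\varphi \ast \varphi^{}_{-} \in L^\infty(\RR^d)$, and that $|K|^2 = K\overline{K} = \widehat{\varphi}\,\overline{\widehat{\varphi}}$ is, by the convolution theorem, the Fourier transform of $\varphi \ast \varphi^{}_{-}$. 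The integrability of $g = |K|^2$ can then be read off from the fact that $\varphi \ast \varphi^{}_{-}$ is a bounded, continuous probability density whose Fourier transform is the nonnegative integrable function $|K|^2$; more carefully, since $g = |K|^2 \geq 0$ is positive-definite (it is the Fourier transform of the nonnegative measure $(\varphi \ast \varphi^{}_{-}) \, \leb^d$), one applies the converse Fourier relation to identify $\widehat{g} = \varphi \ast \varphi^{}_{-}$ and confirm $g \in L^1$ with $\int g = (\varphi \ast \varphi^{}_{-})(0) = \int \varphi^2$, finite precisely when $\varphi \in L^2$. This last point is where I expect the main obstacle to lurk, since \eqref{eq:PPP-ti} does not assume $\varphi$ bounded and hence does not automatically give $\varphi \in L^2(\RR^d)$; I would need to either impose or extract square-integrability of $\varphi$ to guarantee $g \in L^1$.

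Granting $g = |K|^2 \in L^1(\RR^d)$, the conclusion is then a direct application of Corollary~\ref{cor:ET}. The process is stationary and ergodic (indeed mixing), as established in the discussion preceding the proposition, so both hypotheses of the corollary hold. Substituting $\varrho = 1$ and $g = |K|^2$ into \eqref{eq:auto-3} gives $\gamma = \delta_0 + (1 + |K|^2)\,\leb^d$, and substituting into \eqref{eq:diff-3} gives $\widehat\gamma = \delta_0 + (1 + \widehat{g}\,)\,\leb^d$. To finish I would identify $\widehat{g} = \widehat{|K|^2}$ with $\varphi \ast \varphi^{}_{-}$: writing $K = \widehat\varphi$ so that $\overline{K} = \widehat{\varphi^{}_{-}}$ in the $L^2$-sense, the convolution theorem $\widehat{f_1 f_2} = \widehat{f_1} \ast \widehat{f_2}$ already invoked in Remark~\ref{rem:self-reproducing} yields $\widehat{|K|^2} = \widehat{K} \ast \widehat{\overline{K}} = \varphi^{}_{-} \ast \varphi = \varphi \ast \varphi^{}_{-}$, which is exactly the claimed density. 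This produces the stated formula $\widehat\gamma = \delta_0 + (1 + \varphi \ast \varphi^{}_{-})\,\leb^d$ and completes the proof.
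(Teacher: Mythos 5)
You have correctly located the crux, but you have not crossed it, and your chosen route cannot cross it: the gap you flag at the end of your second paragraph is fatal to any direct application of Corollary~\ref{cor:ET}. Under Condition~\eqref{eq:PPP-ti} the density $\varphi$ is only assumed to be a probability density, not bounded, so $\varphi \in L^2(\RR^d)$ --- equivalently $g = |K|^2 \in L^1(\RR^d)$, since $\int g = \int \varphi^2$ as you compute --- simply need not hold. The paper says this explicitly just before its proof: in general, $K$ need not be square-integrable here. Square-integrability cannot be ``extracted'', and imposing it proves a strictly weaker statement than the proposition. The same defect infects your final step, where the convolution theorem $\widehat{f_1 f_2} = \widehat{f_1} \ast \widehat{f_2}$ borrowed from Remark~\ref{rem:self-reproducing} again requires $K \in L^2(\RR^d)$.

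The missing idea is to abandon the function-level ($L^1$) Fourier transform of $g$ and work instead with transformable measures, which is what the paper does. The autocorrelation formula $\gamma = \delta_0 + (1+|K|^2)\ts\leb^d$ needs no integrability of $g$ at all: it follows from Theorem~\ref{thm:ET}, Eq.~\eqref{eq:PPP-first-and-second} and translation invariance, i.e.\ from the \emph{proof} of Corollary~\ref{cor:ET} rather than its statement. For the diffraction, one then only needs the identity of measures $(|K|^2 \ts \leb^d)\widehat\enspace = (\varphi \ast \varphi^{}_{-}) \ts \leb^d$. This holds because $\varphi \ast \varphi^{}_{-} \in L^1(\RR^d)$ (convolution of two integrable functions) and $|K|^2 = K\overline{K} = (\varphi \ast \varphi^{}_{-})\widehat\enspace$, so that $|K|^2$ is the Fourier transform of an integrable function; the fact recorded at the end of Section~2 (equivalently, Fourier inversion in the space of positive, positive-definite measures) then yields $(|K|^2 \ts \leb^d)\widehat\enspace = (\varphi \ast \varphi^{}_{-})^{}_{-} \ts \leb^d = (\varphi \ast \varphi^{}_{-}) \ts \leb^d$, the last equality by the symmetry of $\varphi \ast \varphi^{}_{-}$. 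This is also precisely why the proposition states the diffraction density as $\varphi \ast \varphi^{}_{-}$ rather than $\widehat{|K|^2}$: the latter need not exist as an $L^1$-Fourier transform, which your argument implicitly assumes it does.
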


Note that the autocorrelation and diffraction densities are larger
than $1$ here, in~line with the clumping picture.  Also, under the
above-mentioned assumptions, the diffraction measure is absolutely
continuous apart from the Bragg peak at the~origin, and the absolutely
continuous part of the diffraction measure is~equivalent to Lebesgue
measure.

Note that if $K$ were square-integrable, we could also write
$\widehat{|K|^2}$ instead of $(\varphi \ast \varphi^{}_{-})$ in the
result for the diffraction measure, similarly as in Corollary
\ref{cor:determinantal}.  However, in~general, $K$ need not be
square-integrable here.

\begin{proof}[Proof of Proposition $\ref{prop:permanental}$]
  This follows from the proof of Corollary~\ref{cor:ET} and
  the observation that $(|K|^2 \, \leb)\widehat\enspace = 
  (\varphi \ast \varphi^{}_{-}) \, \leb$.

  To check this observation, note that $K = \widehat\varphi$ implies
  $|K|^2 = (\varphi \ast \varphi^{}_{-})\widehat\enspace$.  Thus,
  $((\varphi \ast \varphi^{}_{-}) \, \leb)\widehat\enspace = |K|^2 \,
  \leb$, and the desired relation follows by Fourier inversion in the
  space of positive and positive-definite measures, and the fact that
  $(\varphi \ast \varphi^{}_{-}) \, \leb$ is a~symmetric measure.
\end{proof}

Clearly, all examples for determinantal point processes translate into
examples for permanental point processes.  In particular, the thinning
procedure described in Remark~\ref{rem:thinning} also extends to
permanental point processes.  However, the kernel $K_p$ introduced
there may now be considered also for $p > 1$, where it still defines a
permanental point process $\omega_p$.  (Of course, the probabilistic
description in terms of thinning breaks down in this region.  However,
at least for natural numbers $p$, $\omega_p$ may be viewed as the
superposition of $p$ independent copies of $\omega$.  In fact, this is
not surprising in view of the representation as a Cox process to be
mentioned below.)  Thus, any permanental point process $\omega$ gives
rise to a whole family of permanental point processes $(\omega_p)_{0 <
  p < \infty}$.  These families interpolate between the homogeneous
Poisson process (for~$p \to 0$) and the \emph{non-ergodic} mixed
Poisson process with the exponential distribution with parameter $1$
as mixing distribution (for~$p \to \infty$).

\begin{remark}
A careful analysis of the arguments in \cite{soshnikov:2000}
shows that, with the~obvious modifications, Remark
\ref{rem:dynamical-spectrum} continues to hold for permanental point
processes as in Proposition~\ref{prop:permanental}.  In particular,
both the diffraction spectrum and the dynamical spectrum are
equivalent to $\delta_0 + \leb^d$ here.
\end{remark}

\begin{remark}
  As mentioned in the last section, our Condition \eqref{eq:DPP-ti} in
  the investigation of determinantal point processes is essential
  in the sense that it must be satisfied for any determinantal point process 
  with a \emph{translation-invariant} kernel satisfying Eq.~\eqref{eq:main}.  
  In contrast, our Condition \eqref{eq:PPP-ti} in the investigation 
  of permanental point processes could be relaxed.

  For instance, we could start from the assumption that the function
  $K : \RR^d \to \CC$ in Eq.~\eqref{eq:PPP-ti} is the Fourier
  transform of a continuous (but not absolutely continuous)
  probability measure $Q$ on $\RR^d$.  Let $\omega$ be the associated
  permanental point process, which exists by
  Prop.~\ref{prop:PPP-existence}.  Then, it is not necessarily true
  that $K(x) \longrightarrow 0$ as $|x| \to \infty$, but one can
  convince oneself that $\omega$ is still stationary and ergodic;
  see Section~5 for details.  Hence, an argument similar to
  the proof of Prop.~\ref{prop:permanental} leads to the conclusion that
  the auto\-correlation and diffraction measures of $\omega$
  are~given~by
\[
  \gamma \, = \, \delta_0 + \bigl(1 + |K|^2 \bigr) \ts \leb^d
  \quad\text{and}\quad
  \widehat\gamma \, = \, \delta_0 + \leb^d + 
  \bigl(Q \ast Q^{}_{-} \bigr) \ts ,
\]
where $Q^{}_{-}(A) := Q(-A)$ denotes the reflection of $Q$ at the
origin.  Of course, the diffraction measure may now contain a
singular continuous component.

Note that we required the probability measure $Q$ to be continuous.
If the prob\-ability measure $Q$ is not continuous, the corresponding
Fourier transform $K$ still gives rise to a stationary permanental
point process $\omega$, but this point process need not be ergodic
anymore.  A simple (counter)example is given by the kernel $K \equiv
1$, for which the associated permanental point process is a mixed
Poisson process with directing measure $Z \, \leb$, where $Z$ has an
exponential distribution with parameter $1$.  This point process is
stationary but not ergodic, and the autocorrelation measure is equal
to $\gamma = Z \, \delta_0 + Z^2 \, \leb$, thus depends on 
the realisation.  \eox
\end{remark}

\section{Cox processes}
\label{sec:COX}

Recall the definition of a Cox process from \cite[Section~6.2]{DV1}.
Let $\omega_0$ be a random measure on $\RR^d$.  A point process
$\omega$ on $\RR^d$ is called \emph{Cox process directed by $\omega_0$} if, 
conditionally on $\omega_0$ (i.e.\@ when $\omega_0$ is regarded as fixed),
$\omega$ is a Poisson process with intensity measure $\omega_0$.

It is a standard result that a Cox process on $\RR^d$ is simple if and only
if the directing measure is continuous.  Furthermore, it is well known
that a Cox process on $\RR^d$ is stationary [ergodic, mixing] if and
only if the directing measure is stationary [ergodic, mixing]; compare
\cite[Proposition~12.3.7]{DV2}.

For the formulation of the next result, let us recall that, for a
general random measure $\omega$, the $k$th moment measure $\MM{k}$ is
defined as the expectation measure (if it exists) of the product
measure $\omega^k$, and for a stationary random measure $\omega$, the
$k$th reduced moment measure $\RMM{k}$ is then defined
similarly as in Section~2.  Furthermore, for a stationary random
measure with mean density $1$, we may define the \emph{reduced
  covariance measure} by
\begin{equation}\label{eq:cumulant-measure-1}
   \RCM{2} \, = \, \RMM{2} - \leb^d \,.
\end{equation}
For a stationary point process with mean density $1$, we may
additionally define the \emph{reduced factorial covariance measure} by
\begin{equation}\label{eq:cumulant-measure-2}
   \RFCM{2} \, = \, \RFMM{2} - \leb^d \,.
\end{equation}

From Theorem~\ref{thm:ET}, we~obtain the following result.
\begin{proposition}\label{prop:cox}
  Let\/ $\omega$ be a stationary and ergodic Cox process 
  with a directing measure\/ $\omega_0$ for which
  the first and second moment measures exist,
  and suppose that\/ $\omega_0$, and hence
  $\omega$, have mean density $1$.  Then, almost~surely, the
  autocorrelation and diffraction measures of\/ $\omega$ are given by
\[
  \gamma \, = \, \delta_0 + \leb^d + \kappa^{}_0
  \quad\text{and}\quad
  \widehat\gamma \, = \, 
  \delta_0 + \leb^d + \widehat{\kappa^{}_0} \ts ,
\]
where\/ $\kappa^{}_0$ is the reduced covariance measure of\/
$\omega_0$, and\/ $\widehat{\kappa^{}_0}$ its Fourier transform.
\end{proposition}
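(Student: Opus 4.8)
The plan is to reduce the statement to Theorem~\ref{thm:ET} by way of the moment identities that relate a Cox process to its directing measure. Since $\omega$ is stationary and ergodic with mean density $\varrho = 1$ and existing reduced second moment measure, Theorem~\ref{thm:ET} applies and gives, almost surely, $\gamma = \RMM{2} = \delta_0 + \RFMM{2}$. It therefore suffices to show that the reduced second factorial moment measure $\RFMM{2}$ of $\omega$ equals $\leb^d + \kappa^{}_0$.

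The key step, and the one I expect to require the most care, is to express the factorial moment measure of $\omega$ through the moment measure of $\omega_0$. Conditionally on $\omega_0$, the process $\omega$ is Poisson with intensity measure $\omega_0$, so conditionally its second factorial moment measure is the product $\omega_0 \otimes \omega_0$, by the standard factorization of factorial moments for a Poisson process. Taking expectations over $\omega_0$ and applying Tonelli's theorem (justified by the assumed finiteness of the second moment measures), I obtain
\[
  \FMM{2}(A \times B) \, = \, \ee\bigl(\omega_0(A)\,\omega_0(B)\bigr)
\]
for all bounded Borel sets $A, B$; that is, the second factorial moment measure $\FMM{2}$ of $\omega$ coincides with the ordinary second moment measure of the directing measure $\omega_0$. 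This is the only place where the defining Cox structure enters the argument.

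Next I would pass to reduced measures. As $\omega$ and $\omega_0$ are both stationary with mean density $1$, the reduction operation of Section~2 is identical for the two, so the preceding identity reduces to the statement that $\RFMM{2}$ equals the reduced second moment measure of $\omega_0$. By the definition of the reduced covariance measure in Eq.~\eqref{eq:cumulant-measure-1}, the latter is precisely $\leb^d + \kappa^{}_0$. Substituting into $\gamma = \delta_0 + \RFMM{2}$ yields $\gamma = \delta_0 + \leb^d + \kappa^{}_0$, the asserted form of the autocorrelation measure.

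Finally, I would compute the diffraction measure by taking Fourier transforms term by term. The autocorrelation $\gamma$ is positive and positive-definite, hence transformable, and since $\delta_0$ and $\leb^d$ are transformable with $\widehat{\delta_0} = \leb^d$ and $\widehat{\leb^d} = \delta_0$, the measure $\kappa^{}_0$ is transformable as well. Linearity of the Fourier transform on transformable measures then gives $\widehat\gamma = \leb^d + \delta_0 + \widehat{\kappa^{}_0} = \delta_0 + \leb^d + \widehat{\kappa^{}_0}$, completing the proof.
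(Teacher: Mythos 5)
Your proof is correct, and its skeleton is the same as the paper's: apply Theorem~\ref{thm:ET} to get $\gamma = \delta_0 + \RFMM{2}(\omega)$, identify $\RFMM{2}(\omega)$ via the Cox structure, and then take Fourier transforms. The one substantive difference is how the Cox moment identity is obtained. The paper simply cites \cite[Proposition 6.2.2]{DV1} for the reduced-covariance formulation $\RFCM{2}(\omega) = \RCM{2}(\omega_0)$ and combines it with Eq.~\eqref{eq:cumulant-measure-2}, whereas you derive the unreduced identity $\FMM{2}(\omega) = \MM{2}(\omega_0)$ from first principles --- conditioning on $\omega_0$, using that the $k$th factorial moment measure of a Poisson process factorizes as the $k$-fold product of its intensity measure, and then integrating out $\omega_0$ --- before passing to reduced measures via Eq.~\eqref{eq:cumulant-measure-1}. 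The two formulations are equivalent (both processes have first moment measure $\leb^d$, so subtracting it commutes with reduction), so this is a matter of self-containment rather than of substance: your route makes the proposition independent of the external reference, at the cost of having to justify the conditional factorization, the Tonelli step, and the compatibility of reduction with the identity; the paper's route is shorter because the standard reference already packages exactly these steps. One small remark on your last paragraph: your indirect argument that $\kappa^{}_0$ is transformable (as the difference $\gamma - \delta_0 - \leb^d$ of transformable measures) works within the signed-measure framework of \cite{baake-birkner-moody:2010}, but the paper handles this point more directly in the remark following the proposition, by noting that $\kappa^{}_0$ is a positive-definite measure and hence transformable with $\widehat{\kappa^{}_0}$ a positive measure.
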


Let us note that the reduced covariance measure $\kappa^{}_0$ is a
positive-definite measure, so that the Fourier transform
$\widehat{\kappa^{}_0}$ exists as a positive measure.  Also, let us
note that $\widehat{\kappa^{}_0}$ is also known as the \emph{Bartlett
  spectrum} of $\omega_0$ in the literature; see \eg \cite[Chapters
8.1 and 8.2]{DV1} for more information.  (More precisely, the Bartlett
spectrum is defined as the inverse Fourier transform of the reduced
covariance measure.  However, as the reduced covariance measure is
symmetric, the Fourier transform and the inverse Fourier transform
coincide, at least for our definition of the Fourier transform.)

\begin{proof}[Proof of Proposition $\ref{prop:cox}$]
  By \cite[Proposition 6.2.2]{DV1}, the reduced factorial covariance
  measure of the Cox process $\omega$ equals the reduced covariance
  measure of the directing measure $\omega_0$,
  \ie $\RFCM{2}(\omega) = \RCM{2}(\omega_0)$.
  Here, the measures in the brackets indicate 
  which measures the moment measures belong to.  
  It~therefore follows
  from Theorem~\ref{thm:ET} and Eq.~\eqref{eq:cumulant-measure-2} that
  the auto\-correlation of $\omega$ is given by
\[
  \gamma \, = \, \delta_0 + \RFMM{2}(\omega) 
  \, = \, \delta_0 + \leb^d + \RFCM{2}(\omega)
  \, = \, \delta_0 + \leb^d + \RCM{2}(\omega_0) \ts . 
\]
Taking the Fourier transform completes the proof.
\end{proof}

\begin{remark}
  It is well known that permanental point processes are special cases
  of Cox processes; see \cite{macchi:1975} or \cite[Proposition
  4.9.2]{HKPV:2009}.  For the convenience of the reader, and since we
  can use this connection to establish the ergodicity of permanental
  point processes, let us outline the argument in a simple situation.

  Suppose that the kernel $K$ is \emph{translation-invariant} and
  satisfies Eq.~\eqref{eq:main} and $K(0,0)$ $= 1$.  Then, the
  underlying function $K : \RR^d \to \CC$ is continuous, Hermitian,
  and positive-definite with $K(0) = 1$, and hence the covariance
  function of a stationary complex Gaussian process $(X_t)_{t \in
    \RR^d}$.  To avoid technical issues, let us assume that $(X_t)_{t
    \in \RR^d}$ has continuous sample paths.  Then, it is not
  difficult to check (using Wick's formula for the moments of complex
  Gaussian random variables; see~\eg \cite[Lemma 2.1.7]{HKPV:2009})
  that the Cox process $\omega$ directed by $\omega_0 := |X_t|^2 \,
  \leb^d$ is a~permanental point process with kernel $K$.

  Furthermore, if $(X_t)_{t \in \RR^d}$ is stationary [ergodic,
  mixing], then $\omega_0$ is also stationary [ergodic, mixing], being
  a factor in the sense of ergodic theory, and this implies that
  $\omega$ is stationary [ergodic, mixing] by the above-mentioned
  results on Cox processes.  Thus, we~obtain useful sufficient conditions 
  for ergodicity and mixing of Cox processes from the well-known theory of
  stationary Gaussian processes: $\omega$ is ergodic if the~spectral
  measure of $(X_t)_{t \in \RR^d}$ is continuous, and $\omega$ is
  mixing if $K(x) \longrightarrow 0$ as $|x| \to \infty$.  \eox
\end{remark}

Let us end this section with an example demonstrating that stationary
and ergodic Cox processes can have additional Bragg peaks apart from
the origin.

\begin{example}
  Consider the continuous stochastic process $X = (X_t)_{t \in \RR}$
  with $X_t = \linebreak 1 + \cos(2\pi(t + U))$, where $U$ is
  uniformly distributed on $[0,1]$.  Let $\omega_0 := X_t \, \leb$ be
  the random measure with density $X_t$, and let $\omega$ be the Cox
  process directed by $\omega_0$.  Then, one can check that
  $\omega_0$, and hence $\omega$, is stationary and ergodic.
  Furthermore, it is easy to check that the reduced covariance measure
  of $\omega_0$ is given by $\kappa^{}_0 = \tfrac12 \cos(2 \pi x) \,
  \leb$.  It therefore follows from Proposition~\ref{prop:cox} that
  the autocorrelation and diffraction measures of $\omega$ are given
  by
\[
  \gamma \, = \, \delta_0 + \lambda + \tfrac12 \cos (2 \pi x) \ts \leb
  \quad\text{and}\quad
  \widehat\gamma \, = \, \delta_0 + \lambda + \tfrac{1}{4} 
  \bigl(\delta_{-1} + \delta_{+1}\bigr)  ,
\]
respectively.  \eox
\end{example}

\section{Zeros of Gaussian random analytic functions}
\label{sec:GAF}

A \emph{Gaussian random analytic function} is a random variable $f$
whose values are analytic functions on $\CC$ with the property that, 
for all $n \in \NN$ and for all choices of $z_1,\hdots,z_n \in \CC$,
the~$n$-tuple $(f(z_1),\hdots,f(z_n))$ has a complex Gaussian 
distribution with mean $0$. One such example is the Gaussian
random analytic function $f$ given by
\begin{equation}\label{eq:GAF}
  f(z) \, := \sum_{n=0}^{\infty} a^{}_n 
  \frac{\sqrt{L^n}}{\sqrt{n!}} z^n \,,
\end{equation}
where $L$ is a positive constant and the $a_n$
are \iid standard complex Gaussian random variables.  We are
interested in the zero set of $f$ viewed as a point process on
$\CC \simeq \RR^2$.  By \cite[Proposition~2.3.4]{HKPV:2009}, 
the distribution of the zero set of $f$ is invariant under translations 
(and also under rotations), and by~\cite[Corollary~2.5.4]{HKPV:2009}, 
$f$ is essentially the only Gaussian random analytic function with 
this property.  
Furthermore, by the proof of \cite[Proposition~2.3.7]{HKPV:2009}, 
the~zero set of $f$ defines an ergodic point process with respect to 
the group of translations. 
Indeed, the zero set of $f$ is even mixing:

\begin{proposition}
\label{prop:GAF-mixing}
The point process given by the zero set of the Gaussian random 
analytic function $f$ in Eq.~\eqref{eq:GAF} is mixing.
\end{proposition}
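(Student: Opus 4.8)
The plan is to deduce mixing from the rapid decay of the (suitably normalised) covariance of the underlying Gaussian field. Throughout, let $\omega$ denote the zero set of $f$, viewed as a point process on $\CC\simeq\RR^2$, and recall from Eq.~\eqref{eq:GAF} that $f$ is a proper (circularly symmetric) complex Gaussian field with $\ee\bigl(f(z)f(w)\bigr)=0$ and Hermitian covariance $\ee\bigl(f(z)\overline{f(w)}\bigr)=e^{L z\overline{w}}$. Stationarity of $\omega$ is already known from \cite[Proposition~2.3.4]{HKPV:2009}, so only the decorrelation of translates needs to be established.

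First I would reduce the mixing condition $\pp\bigl(\omega\in T_x^{-1}(B_1)\cap B_2\bigr)\to\pp(\omega\in B_1)\,\pp(\omega\in B_2)$ to the case where $B_1,B_2$ are \emph{local} events, i.e.\ events depending only on the restriction of the configuration to a fixed bounded disc $D$. Such events form an intersection-stable family generating $\mathscr{N}(\RR^2)$, and a standard approximation (monotone-class) argument upgrades the convergence from this family to all of $\mathscr{N}(\RR^2)$. For a local event, the corresponding event for $\omega$ is measurable with respect to the zeros of $f$ inside $D$, which, by the argument principle, are a measurable functional of the restriction $f|_{D}$.

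The crux is the asymptotic independence of the field near the origin and near a receding point. Here one must be careful, since the raw covariance $e^{Lz\overline{w}}$ does \emph{not} decay---indeed it grows---as its arguments separate. The remedy is to pass to the normalised field $\tilde f(z):=e^{-\frac{L}{2}|z|^2}f(z)$, which has exactly the same zeros as $f$ because the prefactor never vanishes; hence every local event for $\omega$ is measurable with respect to $\tilde f$ on the relevant disc. A direct computation yields
\[
  \ee\bigl(\tilde f(z)\,\overline{\tilde f(w)}\bigr)
  \, = \, \exp\!\Bigl(-\tfrac{L}{2}\,|z-w|^2 + \i L\,\im(z\overline{w})\Bigr),
\]
so that the cross-covariance has modulus $e^{-\frac{L}{2}|z-w|^2}$ and decays like a Gaussian in the separation. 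Writing $B_2$ as a local event based on a disc $D$ around the origin and $T_x^{-1}(B_1)$ as a local event based on a translated disc $D_x$ receding to infinity as $|x|\to\infty$, the cross-covariances between $\tilde f|_{D}$ and $\tilde f|_{D_x}$ then tend to $0$ uniformly. Since $f$ is proper complex Gaussian, its law is governed entirely by the Hermitian covariance, so this decay forces the restrictions $\tilde f|_{D}$ and $\tilde f|_{D_x}$ to become asymptotically independent. Combining this with the translation invariance of $\omega$---which makes the marginal probability $\pp\bigl(\omega\in T_x^{-1}(B_1)\bigr)$ equal to $\pp(\omega\in B_1)$ for every $x$---yields $\pp\bigl(\omega\in T_x^{-1}(B_1)\cap B_2\bigr)\to\pp(\omega\in B_1)\,\pp(\omega\in B_2)$, as required.

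I expect the main obstacle to be the rigorous passage from ``cross-covariances tend to $0$'' to ``the two events decorrelate.'' Two points need care: (i)~the restrictions $\tilde f|_{D}$ and $\tilde f|_{D_x}$ are infinite-dimensional Gaussian objects, so one should first approximate a local event by one depending on finitely many evaluations $\tilde f(z_1),\ldots,\tilde f(z_k)$---using continuity of the zero-counting functional on the set of analytic functions nonvanishing on the relevant boundary---and then invoke the elementary fact that, for jointly Gaussian vectors with fixed marginals, convergence of the off-diagonal covariance block to $0$ entails convergence in distribution to the independent coupling; and (ii)~the reduction to local events must be reconciled with translation invariance, which for $f$ holds only up to the deterministic ``magnetic'' cocycle encoded by the phase $\i L\,\im(z\overline{w})$ above, as in \cite[Proposition~2.3.4]{HKPV:2009}. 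An alternative that sidesteps part of this bookkeeping is to package these twisted translations as a measure-preserving Gaussian dynamical system, observe that $\omega$ is an equivariant factor of it, and use that mixing passes to factors together with the correlation decay just computed.
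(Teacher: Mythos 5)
Your proposal is correct and matches the paper's own proof in all essentials: both pass to the normalised field $f(z)e^{-L|z|^2/2}$ (the paper takes its modulus $v(z)=|f(z)|\ts e^{-|z|^2/2}$, which kills exactly the phase cocycle you flag), exploit the Gaussian decay of its covariance to obtain asymptotic independence of finite-dimensional Gaussian marginals, reduce mixing to a generating intersection-stable family of events, and conclude for the zero set---indeed, the ``alternative'' you mention at the end (zero set as a factor of a stationary Gaussian system, mixing passes to factors) is precisely the paper's final step. The only difference is bookkeeping: by working with the stationary real-valued process $v$ and the factor argument, the paper sidesteps the cocycle reconciliation and the approximation of local zero-counting events by finite-dimensional ones that your main route would still have to carry out.
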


\begin{proof}
We use a similar argument as in the proof of \cite[Proposition~2.3.7]{HKPV:2009}.
  For convenience, let us suppose that $L = 1$.
  Then, using that the covariance kernel 
  of the complex Gaussian process $f$
  is given by $K(z,w) = \exp(z \overline{w})$
  (cf.\@ Equation \eqref{eq:covariance-1} below),
  it is straightforward to check that, for any $\zeta \in \CC$,
  the complex Gaussian processes
  $(f(z+\zeta) e^{-|z+\zeta|^2/2} e^{-\i \im(z \overline\zeta)})_{z \in \CC}$
  and
  $(f(z) e^{-|z|^2/2})_{z \in \CC}$
  have the same distribution.
  As a consequence, the stochastic process $(v(z))_{z \in \CC}$
  given by
\[
v(z) := |f(z)| e^{-|z|^2/2}
\]
  is stationary, \ie for any $\zeta \in \CC$,
  $(v(z+\zeta))_{z \in \CC}$ and $(v(z))_{z \in \CC}$
  have the same distribution.
  Furthermore, the stochastic process $(v(z))_{z \in \CC}$ is mixing,
  \ie for any events  $A,B \in \mathcal{B}(\mathscr{C}(\CC))$, \
\begin{align}
\label{eq:mixing}
  \pp\big((v(z+\zeta))_{z \in \CC} \in A \,\wedge\, (v(z))_{z \in \CC} \in B\big)
  \xrightarrow{\,|\zeta| \to \infty\,}
  \pp\big((v(z))_{z \in \CC} \in A\big) \, \pp\big((v(z))_{z \in \CC} \in B\big) \ts .
\end{align}
  Here $\mathscr{C}(\CC)$ denotes the space 
  of continuous functions $\varphi : \CC \to \RR$
  (endowed with the topology of locally uniform convergence),
  and $\mathcal{B}(\mathscr{C}(\CC))$ denotes its Borel $\sigma$-field,
  which~coincides with the Borel $\sigma$-field
  generated by the projections $\pi_z$, with $z \in \CC$.
  By standard arguments, it suffices to~check \eqref{eq:mixing}
  for events $A$ and $B$ of the form 
  $A = \bigcap_{j=1}^{m} \pi_{z_j}^{-1}(A_j)$
  and
  $B = \bigcap_{k=1}^{n} \pi_{w_k}^{-1}(B_k)$,
  where $m,n \in \NN$, $z_j,w_k \in \CC$, and $A_j,B_k \subset \RR$ are Borel sets.
  But now, again using that the covariance kernel of $f$
  is given by $K(z,w) = \exp(z \overline{w})$,
  it is easy to see that the random vectors
  $(f(z_j+\zeta) e^{-|z_j+\zeta|^2/2} e^{- \i \im(z_j \overline\zeta)})_{j=1,\hdots,m}$
  and
  $(f(w_k) e^{-|w_k|^2/2})_{k=1,\hdots,n}$ 
  are asymptotically independent as $|\zeta| \to \infty$.
  Therefore,
\begin{multline*}
  \pp\big(|f(z_j+\zeta)| e^{-|z_j+\zeta|^2/2} \in A_j \ \forall j \,\wedge\, |f(w_k)| e^{-|w_k|^2/2} \in B_k \ \forall k\big) \\
  \xrightarrow{\,|\zeta| \to \infty\,}
  \pp\big(|f(z_j)| e^{-|z_j|^2/2} \in A_j \ \forall j\big) \, \pp\big(|f(w_k)| e^{-|w_k|^2/2} \in B_k \ \forall k\big) \ts ,
\end{multline*}
  and \eqref{eq:mixing} is proved.
  Since the zero set of the Gaussian random analytic function $f$ 
  may be represented as a factor (in the sense of ergodic theory)
  of the stochastic process $(v(z))_{z \in \CC}$, 
  this establishes Proposition \ref{prop:GAF-mixing}.
\end{proof}

Since the point process of zeros is ergodic
and the moment measures of any order exist (see~below for details), 
the autocorrelation and diffraction measures exist by Theorem \ref{thm:ET}.
By \cite[Corollary~3.4.2]{HKPV:2009}, the $k$-point correlation functions 
of the zero set of $f$ are given by
\begin{equation}\label{eq:gaf-correlationfunction}
   \CF{k}(z^{}_1,\hdots,z^{}_k) \, = \, 
   \per(C - B A^{-1} B^*) / \det(\pi A) \ts ,
\end{equation}
where the $k \times k$ matrices $A,B,C$ have the entries
\[
  A(i,j) \, := \, \ee(f(z_i) \overline{f(z_j)}) \, ,\quad
  B(i,j) \, := \, \ee(f'(z_i) \overline{f(z_j)}) \, ,\quad
  C(i,j) \, := \, \ee(f'(z_i) \overline{f'(z_j)}) \, ,
\]
and $B^*$ denotes the conjugate transpose of $B$.  
Straightforward calculations yield
\begin{align}
\label{eq:covariance-1}
  \ee(f(z) \overline{f(w)}) \, = 
  \sum_{n=0}^{\infty} \ee \bigl(|a_n|^2\bigr) 
  \frac{L^n}{n!} z^n \,\overline{w}^n 
  \, = \, \exp(L z \ts \overline{w}) \, ,
\end{align}
\begin{align}
\label{eq:covariance-2}
  \ee(f'(z) \overline{f(w)}) \, =
  \sum_{n=0}^{\infty} \ee \bigl(|a_n|^2 \bigr) 
  \frac{L^n}{n!} \, n\ts z^{n-1} \ts \overline{w}^{n} 
  \, = \, L \ts \overline{w} \, \exp(L z \ts \overline{w}) \, ,
\end{align}
\begin{align}
\label{eq:covariance-3}
  \ee(f'(z) \overline{f'(w)}) \, = 
  \sum_{n=0}^{\infty} \ee \bigl(|a_n|^2 \bigr) 
  \frac{L^n}{n!} \, n \ts z^{n-1} n \ts \overline{w}^{n-1}
  \, = \, (L^2 z\ts \overline{w} + L)\, \exp(L z \overline{w}) \, . 
\end{align}
If we insert this into Eq.~\eqref{eq:gaf-correlationfunction} (for
$k=1$ and $k=2$), we find after some calculation that
\[
   \CF{1}(z) \, = \, \frac{L}{\pi}
\]
and
\begin{align*}
  \CF{2}(z_1,z_2) \, = \, 
  & \frac{L^2 \exp \bigl(L|z_1-z_2|^2 \bigr) 
  \Big( 1 - \exp(L|z_1-z_2|^2) + L|z_1-z_2|^2 \Big)^2}
  {\pi^2 \Big( \exp(L|z_1-z_2|^2) - 1 \Big)^3} \\ 
  & + \frac{L^2 \Big( 1 - \exp \bigl(L|z_1-z_2|^2 \bigr) 
  + L|z_1-z_2|^2 \exp \bigl(L|z_1-z_2|^2 \bigr) \Big)^2}
  {\pi^2 \Big( \exp \bigl(L|z_1-z_2|^2 \bigr) - 1 \Big)^3} \ts .
\end{align*}
In particular, the $2$-point correlation function depends on $z_1$ and
$z_2$ only via their distance $r := |z_1-z_2|$, as it should.  From
now on, we will always set $L = \pi$, so~that the mean density of the
point process is equal to $1$.  Then, expressing the two-point
correlation function in terms of $r$, we~obtain
\[
  \CF{2}(0,r) \, = \, 1 - g(r) \ts ,
\]
where 
\begin{equation}\label{eq:function-g}
  g(r) \, := \, \frac{e^{-\pi r^2} 
  \big( -2+4\pi r^2-\pi^2 r^4 \big) + 
  e^{-2\pi r^2} \big( 4-4\pi r^2-\pi^2 r^4 \big) 
  - 2 e^{-3\pi r^2}}{\big( 1 - e^{- \pi r^2} \big)^3} \ts . 
\end{equation}
Moreover, an explicit calculation with the Fourier transform of a
radially symmetric function shows that the Fourier transform of $g$ is
given by
\begin{equation}\label{eq:function-h}
  h(s) \, := \, 1 + \sum_{k=2}^{\infty} 
  \frac{(-1)^{k+1}}{(k-2)!} \pi^k s^{2k} \zeta(k+1) \ts ;
\end{equation}
see below for details.
Therefore, Corollary~\ref{cor:ET} implies the following result.

\begin{theorem}\label{thm:GAF}
  Let\/ $\omega$ be the point process given by the zeros of the
  Gaussian random analytic function in Eq.~\eqref{eq:GAF}, 
  with\/ $L = \pi$. Then, the autocorrelation and diffraction measures 
  of\/ $\omega$ are given by
\[
   \gamma \, = \, \delta_0 + \bigl(1 - g(r) \bigr) \ts \leb^2
\quad\text{and}\quad
   \widehat{\gamma} \, = \, \delta_0 + \bigl(1 - h(s) \bigr) \ts \leb^2 \ts ,
\]
where $r \equiv r(x_1,x_2) := \sqrt{x_1^2 + x_2^2}$, $s \equiv
s(t_1,t_2) := \sqrt{t_1^2 + t_2^2}$, and $g(r)$ and $h(s)$ are the
functions defined in Eqs.~\eqref{eq:function-g} and
\eqref{eq:function-h}.
\end{theorem}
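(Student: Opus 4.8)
The plan is to reduce everything to Corollary~\ref{cor:ET}, so that the only genuine work is (a) verifying the hypotheses of that corollary and (b) computing the radial Fourier transform $\widehat g = h$. Stationarity and ergodicity are already in place --- the former from the translation-invariance of the zero set noted above, the latter (indeed, mixing) from Proposition~\ref{prop:GAF-mixing} --- and the factorial moment measures of all orders exist because the correlation functions in Eq.~\eqref{eq:gaf-correlationfunction} are continuous, the denominator $\det(\pi A)$ never vanishing, and hence locally bounded. Since the two-point correlation function depends only on $r = |z_1 - z_2|$, the measure $\FMM{2}$ is absolutely continuous with a translation-invariant density, and with $L = \pi$ the mean density is $\varrho = 1$. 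Writing $\CF{2}(0,r) = 1 - g(r) = \varrho^2 - g(r)$ then matches Eq.~\eqref{eq:extra-2}, with the role of the function $g$ there played by $-g$.

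Next I would confirm that $g \in L^1(\RR^2)$, for which it suffices to control $r \mapsto g(r)\, r$ on $(0,\infty)$. As $r \to \infty$ the denominator $(1 - e^{-\pi r^2})^3$ tends to $1$ while the numerator is dominated by $-\pi^2 r^4 e^{-\pi r^2}$, so $g$ decays like a polynomial times a Gaussian; as $r \to 0$, a short Taylor expansion in $u := \pi r^2$ shows that both numerator and denominator vanish to order $u^3$, giving the finite value $g(0) = 1$, so that $g$ is continuous and bounded near the origin. Thus $g$ is in fact Schwartz-class and certainly integrable, and Corollary~\ref{cor:ET} applies and yields $\gamma = \delta_0 + (1 - g)\,\leb^2$ at once.

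The substantial step, and the one I expect to be the main obstacle, is the explicit evaluation of $h = \widehat g$. Since $g$ is radial, I would pass to the two-dimensional Hankel transform
\[
  h(s) \, = \, \widehat g(s) \, = \, 2\pi \int_0^\infty g(r)\, J_0(2\pi r s)\, r \, \d{r} \,,
\]
and expand $J_0$ in its power series, turning $h$ into a power series in $s^2$ whose coefficients are the radial moments $\int_0^\infty g(r)\, r^{2m+1}\, \d{r}$. To evaluate those, I would expand the denominator as $(1 - e^{-\pi r^2})^{-3} = \sum_{j \geq 0} \binom{j+2}{2} e^{-j\pi r^2}$ and multiply through, writing $g(r) = \sum_{n \geq 1} Q_n(r^2)\, e^{-n\pi r^2}$ for polynomials $Q_n$ whose coefficients depend polynomially (in fact quadratically) on $n$. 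Each summand then integrates to a value of the form $(n\pi)^{-(\,\cdot\,)}$ through the Gamma integral, and the residual sum over $n$ is precisely what produces the Riemann zeta values $\zeta(k+1)$ appearing in Eq.~\eqref{eq:function-h}.

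Here the two delicate points are the justification of the interchanges of summation and integration (licensed throughout by the Gaussian decay and absolute convergence) and, more seriously, the bookkeeping that collapses the several zeta contributions at different arguments --- these arise because the coefficients of $Q_n$ are quadratic in $n$ --- into the single clean coefficient $\tfrac{(-1)^{k+1}}{(k-2)!}\pi^k \zeta(k+1)$ of each power $s^{2k}$. Once this identification of $h$ is complete, the diffraction formula $\widehat\gamma = \delta_0 + (1 - h)\,\leb^2$ follows immediately from Corollary~\ref{cor:ET} (with $\varrho = 1$ and perturbation $-g$), which completes the proof.
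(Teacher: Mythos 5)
Your route is the same as the paper's (Corollary~\ref{cor:ET}, the Gaussian bound on $\varphi$ for integrability of $g(r)=\varphi(\pi r^2)$, Hankel transform, power series of $J_0$, geometric expansion of $(1-e^{-u})^{-3}$, Gamma integrals and zeta sums), but it has one genuine gap, located exactly where the paper inserts Lemma~\ref{lemma:appendix}: the zeroth radial moment. After the substitution $u=\pi r^2$, your moments are $I(m)=\int_0^\infty u^m\varphi(u)\d{u}$, and your plan evaluates them by writing $\varphi(u)=-\sum_{n\ge1}(n^2u^2-4nu+2)\ts e^{-nu}$ and integrating term by term. For a single $n$,
\[
  \int_0^\infty u^\alpha\bigl(n^2u^2-4nu+2\bigr)e^{-nu}\d{u}
  \,=\, \frac{\alpha(\alpha-1)\,\Gamma(\alpha+1)}{n^{\alpha+1}} \ts ,
\]
which vanishes at $\alpha=0$, so termwise integration gives $I(0)=0$, whereas the true value is $I(0)=\int_0^\infty\varphi(u)\d{u}=1$. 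Your blanket justification (``licensed throughout by the Gaussian decay and absolute convergence'') is false precisely here: $\int_0^\infty|n^2u^2-4nu+2|\ts e^{-nu}\d{u}=c/n$ with $c>0$, so the double sum/integral is \emph{not} absolutely convergent and Fubini--Tonelli does not apply; the discrepancy concentrates near $u=0$, where the partial sums of the series are only $O(1/u)$. (For $m\ge1$ the extra factor $u^m$ restores absolute convergence, so all your other coefficients, including the vanishing $s^2$ term, come out right.) Carried out as written, your computation therefore produces $h(s)=\sum_{m\ge2}\frac{(-1)^{m+1}}{(m-2)!}\pi^m s^{2m}\zeta(m+1)$ \emph{without} the leading constant $1$ in Eq.~\eqref{eq:function-h}: you would get $h(0)=0$, hence a diffraction density equal to $1$ at the origin, whereas the correct density vanishes there.

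The paper closes exactly this hole: Lemma~\ref{lemma:appendix} proves $I(\alpha)=\alpha(1-\alpha)\Gamma(\alpha+1)\zeta(\alpha+1)$ only for $\alpha>0$ (where the factor $u^\alpha$ makes domination near $u=0$ available), shows separately that $\alpha\mapsto I(\alpha)$ is continuous on $[0,\infty)$ by dominated convergence applied to the integrand $u^\alpha\varphi(u)$ (not to the series), and then obtains $I(0)=\lim_{\alpha\to0^+}I(\alpha)=1$ from the pole of the zeta function, $\alpha\ts\zeta(\alpha+1)\to1$. Some such regularization, or an independent direct evaluation of $\int_0^\infty\varphi(u)\d{u}$, must be added to your argument; it is the one nontrivial analytic idea in the proof. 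A further minor slip: to get existence of the moment measures you assert that the denominator $\det(\pi A)$ in Eq.~\eqref{eq:gaf-correlationfunction} ``never vanishes''; it does vanish whenever two arguments coincide, since two rows of $A$ then become equal. What is true, and what you actually need, is that the correlation functions extend continuously (the numerator compensates, e.g.\ $\CF{2}(0,r)\to0$ as $r\to0$) and are therefore locally bounded.
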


\begin{figure}
\includegraphics[width=0.48\textwidth]{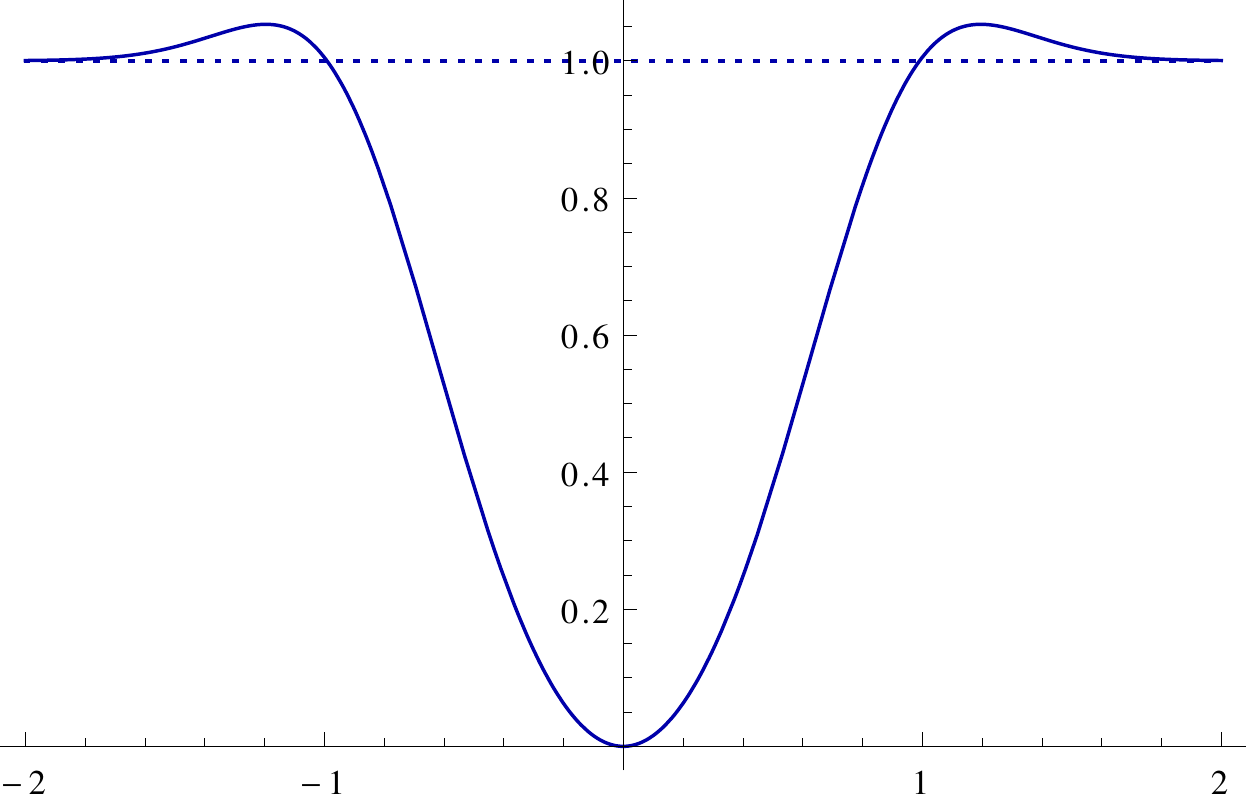}
\includegraphics[width=0.48\textwidth]{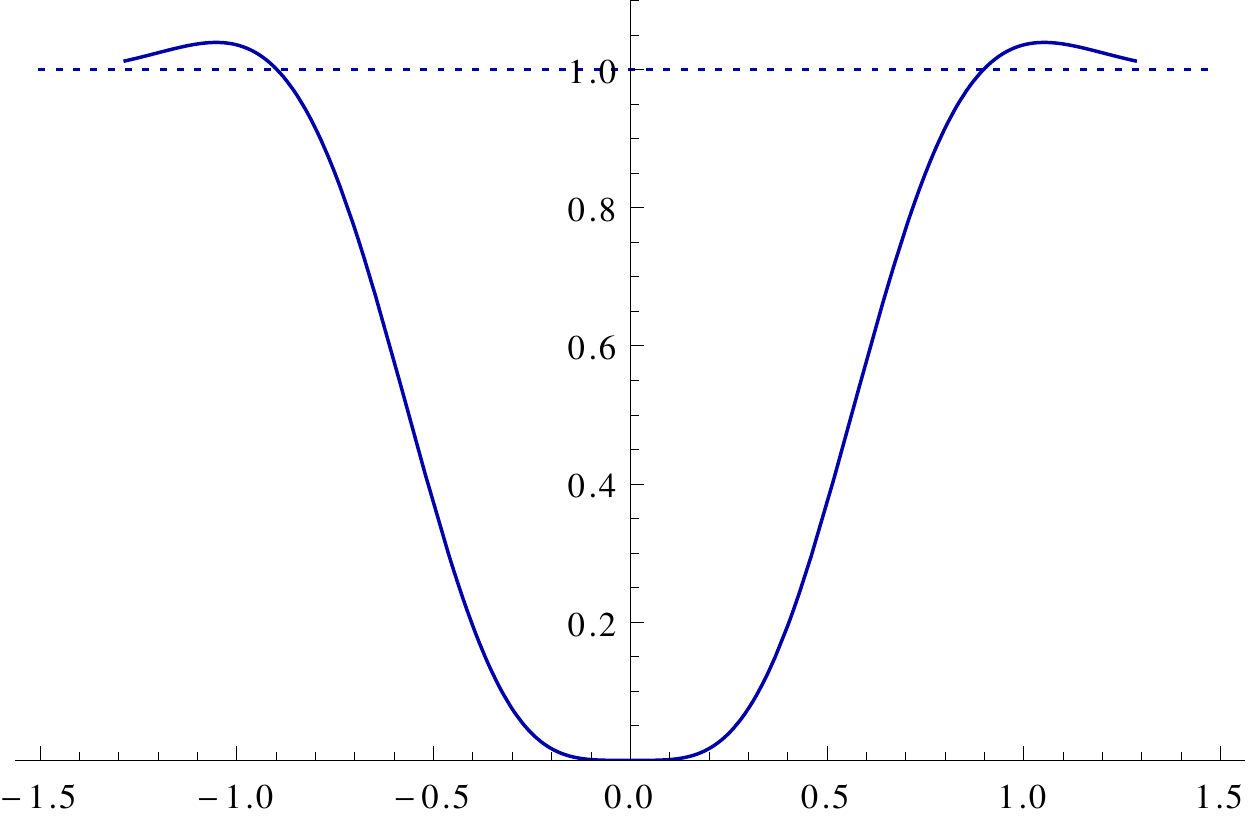}
\caption{The autocorrelation (left) and diffraction (right) density
  (viewed along a line through the origin) of the point process that derives
  from the zeros of the Gaussian random analytic function
  \eqref{eq:GAF}.}\label{fig:GAF}
\end{figure}

We can see from Figure~\ref{fig:GAF} that the diffraction density
exceeds $1$ for $s \approx 1$.  Consequently, as already observed in
\cite{HKPV:2009}, the zero set of the Gaussian random analytic function
$f$ is \emph{not} a determinantal point process.

\begin{proof}[{Proof of Theorem $\ref{thm:GAF}$}]
Let
\begin{equation}\label{eq:function-phi}
   \varphi(u) \, := \, \frac{e^{-u} \big( {-}2+4u-u^2 \big) + 
   e^{-2u} \big( 4-4u-u^2 \big) - 2 e^{-3u}}{(1-e^{-u})^3} 
   \, , \quad u > 0 \ts .
\end{equation}
A straightforward Taylor expansion shows that $\varphi(u) = 1 + o(u)$ 
as $u \to 0$.  Thus, $\varphi(u)$ has a~continuous extension at zero,
with $\varphi(0) = 1$.  Moreover, there exists a constant $C > 0$ such that
$|\varphi(u)| \leq C e^{-u/2}$ for all $u \geq 0$.  It therefore follows
that the function $g(r) = \varphi(\pi r^2)$ (regarded as a radially
symmetric function on $\RR^2$) is~integrable on $\RR^2$.

We can now compute the Fourier transform of $g$.  
In general, when $g \in L^1(\mathbb R^2)$ is radially symmetric, 
which means that it depends only on
the Euclidean norm $r = |x|$, the analogous property holds for
$\widehat{g}$.  
Writing $g(r)$ and $\widehat{g}(s)$ instead of $g(x)$ and $\widehat{g}(t)$, respectively,
and using polar coordinates,
one obtains
\begin{align}
    \widehat{g}(s) \,
  = \int_0^\infty \int_{0}^{2\pi} e^{-2\pi\i rs\cos(\vartheta)} 
      \d\vartheta \, g(r) \, r \d{r} 
  = \, 2\pi \int_0^\infty r g(r) \, J_0(2\pi rs) \d{r} \ts ,
  \label{eq:fourier-rsf}
\end{align}
which is essentially the Hankel transform in one dimension.  Here, we
have employed the classic identity
\[
  \frac{1}{2\pi} \int_{0}^{2\pi} 
   e^{\i z \cos(\vartheta)} \d{\vartheta} \, = \, J_0(z) \ts ,
\]
where $J_0$ is the Bessel function of the first kind of order $0$,
with series expansion
\[
  J_0 (z)\, = \sum_{m=0}^{\infty} (-1)^m \frac{z^{2m}}{4^m (m!)^2} \ts .
\]
Clearly, $J_0(0) = 1$, while $J_0(r) = \myo(r^{-1/2})$ as $r \to
\infty$; compare \cite{abramowitz-stegun:1964}.
Applying the identity \eqref{eq:fourier-rsf} 
to the function $g(r) = \varphi(\pi r^2)$ 
from \eqref{eq:function-g}, we obtain, after a change of variables,
\begin{equation}\label{eq:appendix}
  \widehat{g}(s) \, = \int_0^\infty \varphi(u) \, J_0(\sqrt{4\pi u} s) \d{u}
  \, = \sum_{m=0}^{\infty} \frac{(-1)^m \pi^m s^{2m}}{(m!)^2} 
  \int_0^\infty u^m \varphi(u) \d{u} \ts .
\end{equation}
The exchange of integration and summation 
is justified by dominated convergence, 
using the estimate $|\varphi(u)| \leq C e^{-u/2}$.

\begin{lemma}\label{lemma:appendix}
  The function\/ $I : [0,\infty) \to \RR$ defined by\/ $I(\alpha) :=
  \int_0^\infty u^\alpha \varphi(u) \d{u}$ is continuous on\/ $[0,\infty)$,
  with
\[
  I(\alpha) \, = \, \begin{cases}
       1 , & \text{if\/} \ \alpha = 0, \\
       \alpha (1-\alpha) \Gamma(\alpha+1) 
       \zeta(\alpha+1) , & \text{if\/} \ \alpha > 0, 
\end{cases}
\]
where\/ $\Gamma$ is the gamma function and\/ $\zeta$ is Riemann's zeta
function.
\end{lemma}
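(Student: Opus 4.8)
The plan is to expand $\varphi$ into a series of pure exponentials $e^{-ku}$, integrate term by term against $u^{\alpha}$ to produce values of the Gamma function, and recognise the resulting sum over $k$ as $\zeta(\alpha+1)$; the endpoint value $I(0)=1$ together with continuity at $0$ will require a separate limiting argument, which is where the real subtlety lies.

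First I would rewrite the integrand. Setting $q = e^{-u}$ and collecting the numerator of $\varphi$ in \eqref{eq:function-phi} by powers of $u$, one checks that it equals $-2q(1-q)^2 + 4u\,q(1-q) - u^2 q(1+q)$, so that
\[
  \varphi(u) \, = \, \frac{-2q}{1-q} + \frac{4u\,q}{(1-q)^2}
  - \frac{u^2 q(1+q)}{(1-q)^3} \ts .
\]
Next I would invoke the elementary generating-function identities $\frac{q}{1-q} = \sum_{k=1}^{\infty} q^k$, $\frac{q}{(1-q)^2} = \sum_{k=1}^{\infty} k\,q^k$, and the convenient identity $\frac{q(1+q)}{(1-q)^3} = \sum_{k=1}^{\infty} k^2 q^k$ (obtained by adding $\frac{q}{(1-q)^3}$ and $\frac{q^2}{(1-q)^3}$). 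Substituting $q = e^{-u}$ then yields the key expansion
\[
  \varphi(u) \, = \, \sum_{k=1}^{\infty}
  \bigl( -2 + 4ku - k^2 u^2 \bigr) \, e^{-ku} \ts ,
\]
valid for every $u > 0$.

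For $\alpha > 0$ I would integrate term by term, justifying the interchange by Tonelli's theorem: with $\int_0^{\infty} u^{\beta} e^{-ku} \d{u} = \Gamma(\beta+1)/k^{\beta+1}$, the absolute term-by-term integrals are bounded by a constant multiple of $\Gamma(\alpha+1)\,k^{-(\alpha+1)}$, which is summable precisely because $\alpha > 0$. The $k$-th integral then evaluates to
\[
  \frac{\Gamma(\alpha+1)}{k^{\alpha+1}}
  \bigl[ -2 + 4(\alpha+1) - (\alpha+2)(\alpha+1) \bigr]
  \, = \, \frac{\Gamma(\alpha+1)}{k^{\alpha+1}} \, \alpha(1-\alpha) \ts ,
\]
the bracket collapsing to $\alpha(1-\alpha)$ after expansion. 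Summing over $k$ and using $\sum_{k=1}^{\infty} k^{-(\alpha+1)} = \zeta(\alpha+1)$ gives $I(\alpha) = \alpha(1-\alpha)\,\Gamma(\alpha+1)\,\zeta(\alpha+1)$, as claimed.

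The main obstacle is the point $\alpha = 0$, where this interchange breaks down: the bound $\sum_{k=1}^{\infty} k^{-1}$ diverges, and indeed every individual term-by-term integral is exactly zero (the prefactor $\alpha(1-\alpha)$ vanishes), which would \emph{falsely} suggest $I(0) = 0$. I would instead obtain the value and the continuity together. Continuity of $I$ on $[0,\infty)$ follows from dominated convergence: using the bound $|\varphi(u)| \leq C e^{-u/2}$ recorded above together with $u^{\alpha} \leq 1$ on $[0,1]$, one has $|u^{\alpha}\varphi(u)| \leq C \max(1,u^{A})\, e^{-u/2}$ uniformly for $\alpha \in [0,A]$, and this dominating function is integrable. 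Hence $I(0) = \lim_{\alpha \to 0^{+}} \alpha(1-\alpha)\,\Gamma(\alpha+1)\,\zeta(\alpha+1)$, and since $\Gamma(\alpha+1) \to 1$ while $\zeta(\alpha+1) \sim 1/\alpha$ as $\alpha \to 0^{+}$ (the simple pole of $\zeta$ at $1$ with residue $1$), this limit equals $1$. As a self-contained alternative for the value at $0$, the expansion above can be written as $\varphi(u) = -\tfrac{d^2}{d u^2}\bigl( u^2/(e^u-1) \bigr)$, whence $I(0) = -\bigl[ \tfrac{d}{du}\bigl( u^2/(e^u-1) \bigr) \bigr]_0^{\infty} = 1$.
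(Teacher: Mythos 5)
Your proof is correct and follows essentially the same route as the paper: the same expansion $\varphi(u)=\sum_{k\ge1}(-2+4ku-k^2u^2)e^{-ku}$ (the paper obtains it by expanding $(1-e^{-u})^{-3}$ rather than via generating-function identities), the same termwise integration for $\alpha>0$, and the same treatment of $\alpha=0$ via dominated-convergence continuity and the simple pole of $\zeta$ at $1$. Your closing observation that $\varphi(u)=-\tfrac{d^2}{du^2}\bigl(u^2/(e^u-1)\bigr)$, giving $I(0)=1$ directly, is a nice self-contained supplement not present in the paper.
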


We thus have $I(0) = 1$, $I(1) = 0$ and $I(m) = -m(m-1) \, m! \,
\zeta(m+1)$ for $m \in \NN$, $m \geq 2$.  Inserting this into
\eqref{eq:appendix} gives
\[
  \widehat{g}(s) \, = \, 1 + 
  \sum_{m=2}^{\infty} \frac{(-1)^{m+1} }{(m-2)!} \,
  (\pi s^2)^{m} \ts \zeta(m+1) \ts ,
\]
as claimed.
\end{proof}

\begin{proof}[Proof of Lemma $\ref{lemma:appendix}$]
  Using the estimate $|\varphi(u)| \leq C e^{-u/2}$, it is straightforward
  to see that $u \mapsto u^\alpha \varphi(u)$ is integrable for any $\alpha
  \geq 0$ and, by dominated convergence, $\alpha \mapsto I(\alpha)$ is
  continuous at any $\alpha \geq 0$.

  Observe that $(1-e^{-u})^{-3} = \sum_{n=0}^{\infty} \tfrac12
  (n+1)(n+2)\ts e^{-nu}$ for any $u > 0$.  Inserting this into
  \eqref{eq:function-phi} gives
\[
   \varphi(u) \, = \,  - \sum_{n=1}^{\infty} (n^2 u^2 - 4nu + 2)\ts e^{-nu}
\]
for any $u > 0$.  Since $\int_0^\infty u^{x-1} e^{-u} \d{u} =
\Gamma(x)$ and $\Gamma(x+1) = x \Gamma(x)$ for any $x > 0$, we~find
for any fixed $\alpha > 0$ that
\[
  I(\alpha) \, = \, - \sum_{n=1}^{\infty} 
  \frac{\Gamma(\alpha+3)}{n^{\alpha+1}}
  + 4 \sum_{n=1}^{\infty} \frac{\Gamma(\alpha+2)}{n^{\alpha+1}}
  - 2 \sum_{n=1}^{\infty} \frac{\Gamma(\alpha+1)}{n^{\alpha+1}}
  \, = \, \alpha (1-\alpha) \Gamma(\alpha+1) \zeta(\alpha+1) \ts .
\]
Here, the termwise integration is justified by dominated convergence
as long as $\alpha > 0$.

Recalling that $\alpha \zeta(\alpha + 1) = 1 + o(1)$ as $\alpha \to 0$
and using the continuity of $\alpha \mapsto I(\alpha)$ at $\alpha = 0$,
we finally obtain
\[
  I(0) \, = \, \lim_{\alpha \to 0} I(\alpha) 
  \, = \, \Gamma(1) \, = \, 1 \ts ,
\]
which completes the argument.
\end{proof}

Although we have seen in Proposition \ref{prop:GAF-mixing} that 
the point process of zeros of the Gaussian random analytic function $f$ 
is mixing, any realisation of it displays an amazing amount of structure. 
Indeed, the zero set can be given a~remarkable visual interpretation 
as tilings. The details can be found in \cite[Chapter 8]{HKPV:2009},
but briefly, the~tilings arise as the basins of descent
of the `potential function' $u$ on~$\CC$
defined by
\[
    u(z) := \log |f(z)| - \tfrac{1}{2} |z|^2 \,.
\]
This function goes to $-\infty$ at the zeros of $f$, and if one follows
the gradient curves defined by the equation
\[
    \frac{dZ(t)}{dt} = \nabla u(Z(t)) \,,
\]
they can be thought of as paths of descent under $u$ 
as a `gravitational' attraction. The~basins of attraction
of each zero of $f$ then lead to a tiling. Remarkably,
these tiles almost surely have the same area $\pi/L$
and are bounded by finitely many smooth curves; 
see \cite{sodin-tsirelson:2006} or \cite[Theorem~8.2.7]{HKPV:2009}.
Technically, this is an example of an \emph{allocation},
by which area is associated to each point of the point process.

\begin{figure}
\includegraphics[width=0.6\textwidth]{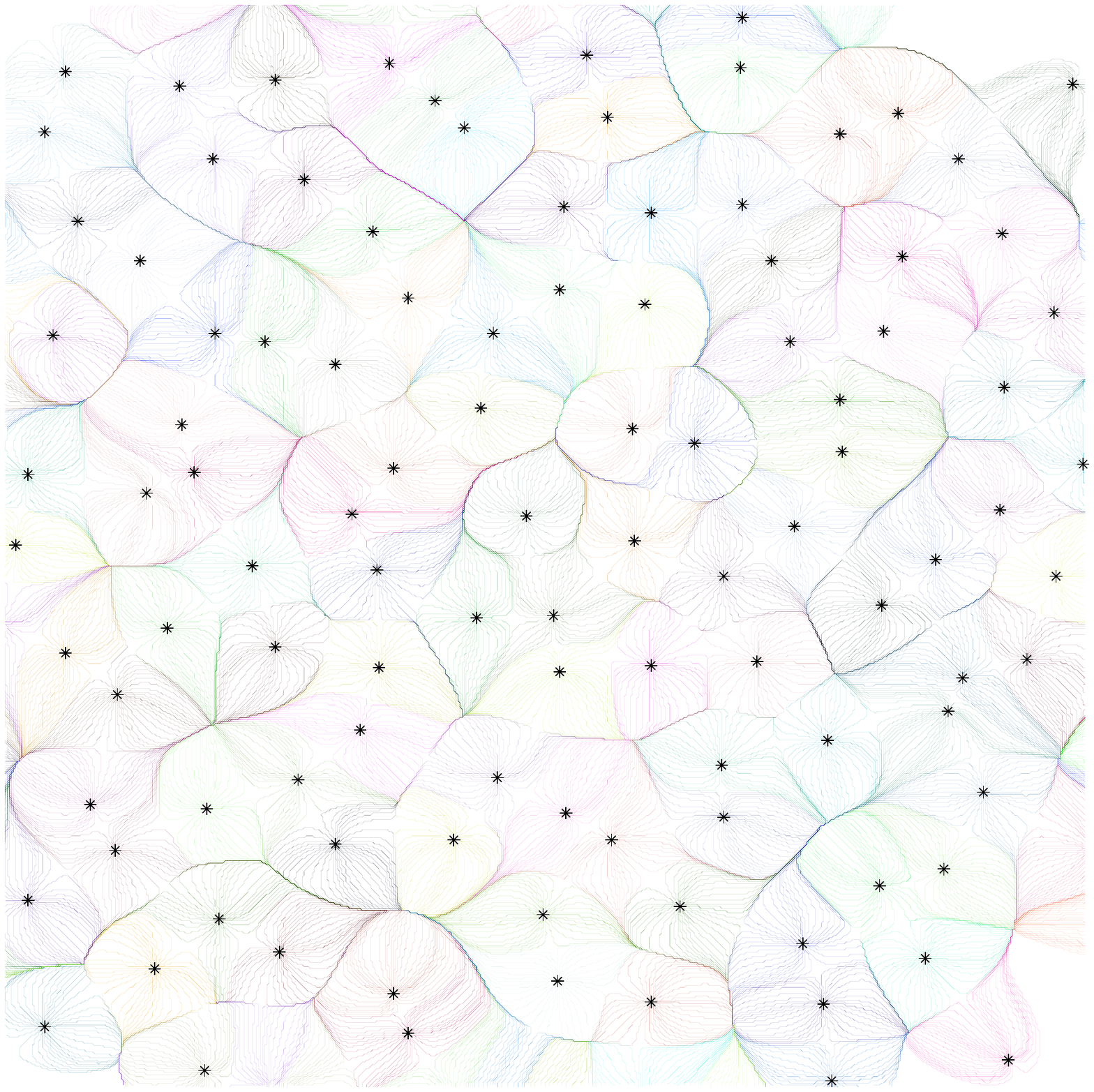}

\caption{The point process of zeros of the Gaussian random analytic function
\eqref{eq:GAF} and the associated basins of attraction.
Thanks to Manjunath \mbox{Krishnapur} and Ron \mbox{Peled}
for making the MatLab program available to us.}
\label{fig:basins}
\end{figure}

Clearly, we have no long-range order that manifests itself
as non-trivial Bragg peaks, and also none that would lead to
singular continuous components. Yet, there are lots of 
remarkable patterns that almost repeat (at random locations,
of course), and one might wonder to what extent spectral theory
is able to capture such features. This might be an interesting point
for future investigations.

\section*{Acknowledgements}
We thank Manjunath Krishnapur for drawing our attention
to Gaussian analytic functions.
We also thank the reviewers for their useful comments.
This project was supported by the German Research Foundation (DFG),
within the CRC~701. Robert V.\@ Moody was also supported
by the Natural Sciences and Engineering Research Council
of Canada.

\end{document}